\def\[{\ensuremath{[ \! [}}
\def\]{\ensuremath{] \! ]}}
\newcommand{\topp}{{\mathsf{top}}}
\newcommand{\leqt}{<\!:}
\newcommand{\fsub}{${\mathsf{F_{\leqt}}}\hspace{1pt}$}
\newcommand{\fwedge}{${\mathsf{F_\wedge}}\hspace{1pt}$}
\newcommand{\fsubtop}{${\mathsf{F_{\leqt}^\top}}\hspace{1pt}$}
\newcommand{\fsubu}{${\mathsf{F_{\leqt}^{K\top}}}\hspace{1pt}$}
\def\k{{\mathsf{K}}}
\def\Y{{\mathbf{Y}}}
\def\0{{\mathbf{0}}}
\def\1m{\iota}
\def\1{{{\mathbf{1}}}}
\newcommand{\dom}{{\mathsf{dom}}}
\newcommand{\spc}{\hspace{1pt}}
\begin{document}
\begin{frontmatter}
\title{Revisiting Decidable Bounded Quantification,\\ via Dinaturality}
\author{J. Laird}
\address{Department of Computer Science, University of Bath, UK}


\begin{abstract}
We use a semantic interpretation to investigate the problem of defining an expressive but decidable type system with bounded quantification. Typechecking in the widely studied System \fsub is undecidable, thanks to an undecidable subtyping relation, for which the culprit is the rule for subtyping bounded quantification. Weaker versions of this rule, allowing decidable subtyping, have been proposed. One of the resulting type systems (Kernel \fsub\!) lacks expressiveness, another (System \fsubtop\!) lacks the minimal typing property and thus has no evident typechecking algorithm.

We consider these rules as defining distinct forms of bounded quantification,  one for interpreting type variable abstraction, and the other for type instantiation. 
 By giving a semantic interpretation for both in terms of unbounded quantification, using the dinaturality of type instantiation with respect to subsumption, we show that they can coexist within a single type system. This does have the minimal typing property and thus a simple typechecking procedure.

We consider the fragments of this unified type system over types which contain only one form of bounded quantifier. One of these is Kernel \fsub\! while the other can type strictly more terms than System \fsubtop but the same set of $\beta$-normal terms. We show decidability of typechecking for this fragment, and thus for System \fsubtop typechecking of $\beta$-normal terms.
\end{abstract}
\begin{keyword}
Bounded quantification, Dinaturality
\end{keyword}
\end{frontmatter}

\section{Introduction}
By combining subtype and parametric polymorphism, type systems with bounded quantification may be used to write programs which  are generic, but range over a constrained set of types (a program of  type $\forall X \leqt S.T$ may be instantiated only with a subtype of $S$).  They  have been used to develop theories of  key aspects of object oriented  languages such as  \emph{inheritance}  \cite{CFun}.  However, the problem of designing a tractable but expressive type system with bounded quantification is surprisingly difficult. The most natural and widely studied system based on the $\lambda$-calculus --- System \fsub --- has an undecidable subtyping relation \cite{Pierce1}, and thus an undecidable typing relation. It has nonetheless been influential in the development of subsequent type systems with bounded quantification such as the DOT (Dependent Object Types) calculus \cite{dot1,dot2}. Undecidability of subtyping for a key fragment of this system has been shown by reduction to  System \fsub \cite{dot1,HuL}.

Attempts to modify System \fsub to recover decidability of subtyping have been partially successful: amongst these, returning to the weaker subtyping rule  for bounded quantification from the Fun calculus \cite{CFun} ($\mathsf{\forall-Fun}$) gives a well-behaved system (Kernel \fsub\!) with decidable typechecking, at the cost of a rather arbitrary restriction on the  subtyping relation, leaving it unable to express some natural, and potentially useful,  instances of subtyping. These \emph{are} captured by System \fsubtop\!, a version  of System \fsub  with a different subtyping rule for bounded quantification ($\mathsf{\forall -Top}$) leading to a decidable and reasonably expressive subtyping relation but to date no effective, sound and complete typechecking procedure \cite{fsubtop}. A third  potential replacement for the quantifier subtyping rule  ($\mathsf{\forall-Loc}$) is intuitively appealing but does not have an evident subtyping algorithm  \cite{fsubtop}.
Other decidable variants include the (heavy) restriction of not allowing quantification bounds to include the $\top$ type \cite{KaS}, and a family of structural \emph{extensions} of System \fsub with decidable subtyping \cite{vor}.  
  More recently,  Strong Kernel \fsub has benn proposed \cite{HuL} as a decidable subtyping system for \fsub types using two contexts of type bounds, which is strictly more expressive than Kernel \fsub\!.

We revisit these problems from a semantic perspective, returning to the original, framework (single contexts, no restriction on which types can appear as bounds). We consider the two different rules from Kernel \fsub\ \! and System \fsubtop\! for subtyping bounded quantification as defining different forms of bounded  quantifier  --- one for typing abstraction of type variables, and one for typing instantiation,
 within a single type system: System \fsubu\!.

Developing  an interpretation of bounded types proposed   in \cite{piercethesis}, we give interpretations of these two bounded quantifiers   in terms of unbounded quantification and a meet operation on types, and derive a version of the  $\mathsf{\forall-Loc}$ rule for inferring the subtyping relation between them. 
We extend this interpretation to a type system for System \fsub terms, and show soundess with respect to second-order $\beta$ and $\eta$ equality. This depends on the \emph{dinaturality} property of quantifier instantiation with respect to subsumption (which was  introduced to the equational theory of System \fsub by Cardelli et. al. \cite{fsub}).  

The subtyping and typechecking algorithms for System \fsub \cite{CGhe} adapt readily to our unified type system. Its most practically relevant fragments are those in which the quantifiers in the types annotating terms and contexts all satisfy the same subtyping rule. One is simply Kernel \fsub itself. The other is a modest extension of System  \fsubtop to include terms typable with the ``missing'' minimal types to allow a simple, terminating typechecking procedure. We establish that this fragment is nonetheless semantically equivalent to System \fsubtop\!, by showing that it types the same $\beta$-normal terms, for which  System \fsubtop\! typechecking is therefore decidable.

\section{Background: Subtyping Bounded Quantification}
We first review System \fsub \cite{CGhe,fsub} (and its subsystems,  with their subtyping and typechecking problems). Its raw types  are given by the grammar:
 $$T::= \top\ |\ X\ |\ T \rightarrow T\ |\ \forall (X \leqt T).T$$
where $X$ ranges over a set of type-variables. We follow convention in defining the unbounded quantification $\forall X.S$ to be $\forall (X \leqt \top).S$ and identifying types up to $\alpha$-conversion of bound variables.  

A \emph{context} $\Theta$ is a sequence of assumptions of the form $X \leqt T$ (the type-variable X has bound $T$) or $x:T$ (the term-variable $x$ has type $T$). Judgments $\Theta \vdash T$   --- ``$T$ is a well-defined type in the well-defined context $\Theta$'' --- are derived according to the rules in Table \ref{tfr}. ($\Theta \vdash \top$ thus means that $\Theta$ is a well-defined context.)  
\begin{table}
\begin{center}
\AxiomC{${\color{white} T}$} 
\UnaryInfC{\small $ \_ \vdash \top$} 
\DisplayProof\ \ \ 
 \AxiomC{\small$\Theta \vdash S$} 
\AxiomC{\small$ \Theta \vdash T $} 
\RightLabel{\small $X \not \in \dom(\Theta)$}
\BinaryInfC{\small $\Theta, X \leqt S \vdash T $}  
\DisplayProof\ \ \ 
\AxiomC{\small $\Theta \vdash S$}
\AxiomC{\small $\Theta \vdash T$}
\RightLabel{\small $x \not \in \dom(\Theta)$}
\BinaryInfC{\small$ \Theta, x:S \vdash T $}
\DisplayProof\end{center}
\begin{center}
\AxiomC{\small $\Theta,X \leqt T \vdash \top$}\UnaryInfC{\small$ \Theta,X\leqt T \vdash X$}\DisplayProof\ \ \ 
\AxiomC{\small $  \Theta \vdash S$}\AxiomC{ \small $ \Theta \vdash T$}\BinaryInfC{\small$ \Theta \vdash S \rightarrow T $}\DisplayProof \ \ \  
\AxiomC{\small $\Theta, X \leqt S \vdash T$}\UnaryInfC{\small$  \Theta \vdash \forall (X \leqt S).T$}  \DisplayProof 
\end{center}
\begin{center}
\caption{Type and Context Formation Rules for System \fsub}   \label{tfr}
\end{center}
\end{table}

\begin{table}
\begin{center}
 \AxiomC{\small $\Theta, X \leqt T,\Theta' \vdash \top$}\RightLabel{\small $\mathsf{Var}$} 
\UnaryInfC{\small $\Theta, X \leqt T,\Theta' \vdash X \leqt T $} \DisplayProof\ \ \ \AxiomC{\small $\Theta \vdash T$}\RightLabel{\small $\mathsf{Top}$} \UnaryInfC{\small $\Theta \vdash T \leqt \top$}\DisplayProof\ \ \  \AxiomC{\small $\Theta \vdash T$}\RightLabel{\small $\mathsf{Refl}$}  \UnaryInfC{\small $\Theta \vdash T \leqt T$}\DisplayProof\ \ \
  \AxiomC{\small $\Theta \vdash T \leqt T'$} \AxiomC{\small $  \Theta \vdash T' \leqt T''$} \RightLabel{\small $\mathsf{Trans}$} \BinaryInfC{\small$\Theta \vdash T \leqt T''$}\DisplayProof
\end{center}
\begin{center}
  \AxiomC{\small $\Theta \vdash S' \leqt S$}\AxiomC{\small$  \Theta \vdash T \leqt T'$}\RightLabel{\small $\mathsf{\rightarrow}$}  \BinaryInfC{\small $ \Theta \vdash S \rightarrow T \leqt S' \rightarrow T'$}\DisplayProof \ \ \ 
\AxiomC{\small $\Theta \vdash T_0 \leqt S_0$}  \AxiomC{\small $ \Theta, X \leqt T_0 \vdash S_1 \leqt T_1$}\RightLabel{\small $\mathsf{\forall-Orig}$} \BinaryInfC{\small $\Theta \vdash \forall (X \leqt S_0).S_1 \leqt \forall(X \leqt T_0).T_1$}
\DisplayProof
\end{center}
\begin{center}
\caption{Subtyping Rules for System \fsub}   \label{stj}
\end{center}
\end{table}

Subtyping judgments --- $\Theta \vdash S \leqt T$ (where $\Theta \vdash S$ and $\Theta \vdash T$) --- are derived according to the rules in Table \ref{stj}. This includes the following rule for subtyping bounded quantifications:
 \begin{center}
\AxiomC{$\Theta \vdash T_0 \leqt S_0 $}\AxiomC{$\Theta, X\leqt T_0 \vdash S_1 \leqt T_1$}\RightLabel{$\mathsf{\forall-Orig}$}\BinaryInfC{$\Theta \vdash  \forall (X \leqt S_0). S_1 \leqt  \forall (X \leqt T_0). T_1$}  \DisplayProof
\end{center}
Although well-motivated semantically, this rule is problematic from an algorithmic point of view. Reading from the bottom up, the bound on instances of  $X$ occuring in $S_1$ changes from $S_0$ to $T_0$. This ``re-bounding'' prevents the sound and complete subtyping algorithm (deterministic search procedure  for subtyping derivations) given in \cite{CGhe} from terminating on all inputs and indeed permits the encoding of a two-counter machine as a subtyping problem \cite{Pierce1}. The subtyping relation for System \fsub  is therefore undecidable.

Various proposals have been made to describe a more tractable subtyping relation on System \fsub types, including ({\it{inter alia}}) more restricted forms of the quantifier subtyping  rule, which we now describe. Instead of viewing these variants as simply defining different type systems, we will also treat them as defining different quantifiers (potentially within a single type system), which we distinguish by decorating them with different superscripts.

{\bf{Proposal 1}}  Restrict the subtype order on quantified types to those which have equal bounds. This corresponds to the rule ($\mathsf{\forall-Fun}$) for subtyping bounded quantification from Cardelli and Wegner's original \emph{Fun} calculus \cite{CFun}:
\begin{center}

\begin{prooftree}
\AxiomC{$\Theta, X\leqt S \vdash T \leqt T'$}\RightLabel{$\mathsf{\forall-Fun}$}
 \UnaryInfC{$\Theta \vdash  \forall^\k (X \leqt S).T \leqt \forall^\k (X \leqt S).T'$}
\end{prooftree}
\end{center}
This yields  a type system --- \emph{Kernel \fsub} ---  which is algorithmically well-behaved (subtyping and type-checking are efficiently decidable) at a significant cost in expressiveness: quantified types may only be compared if their bounds are the same. For example, in System \fsub an unbounded quantification $\forall X.T$ is always a subtype of any  bounded quantification  $\forall (X \leqt S).T$ with the same body, whereas in Kernel \fsub it may only be a subtype of another unbounded quantification.
 Another  example: if $S \leqt S'$ then the  abstract data type  $\exists (X \leqt S').T$ is ``more abstract'' than $\exists (X \leqt S).T$ in the sense that it is less constrained in the types that may be used to implement $X$. Representing $\exists (X \leqt S).T$ as $\forall Y. (\forall (X \leqt S).(T \rightarrow Y))\rightarrow Y$, this is captured as a subtyping  in the original system (i.e. $\exists (X \leqt S).T \leqt \exists  (X \leqt S').T$) but not in Kernel \fsub ---  $\exists^\k (X \leqt S).T \not\leqt \exists^\k  (X \leqt S').T$ if $S \not \equiv S'$.

 {\bf{Proposal 2}}  Ignore the bounds on the quantified variable when inferring the subtype relation on the bodies of quantified types. This corresponds to the following  rule ($\mathsf{\forall-Top}$), which  was proposed by Castagna and Pierce  as the basis of System \fsubtop \cite{fsubtop}: 
\begin{center}
\begin{prooftree}
\AxiomC{$\Theta \vdash  T_0 \leqt S_0$} \AxiomC{$ \Theta,X \leqt \top \vdash S_1 \leqt T_1$}\RightLabel{$\mathsf{\forall-Top}$}  
\BinaryInfC{$\forall^\top (X \leqt S_0).S_1 \leqt \forall^\top (X \leqt T_0).T_1$
}
\end{prooftree}
\end{center}
While not strictly more expressive than Kernel \fsub\!, this yields an expressive subtyping relation (e.g. capturing the relative abstractness of existential types) which has useful properties, including  decidability and the existence of meets and joins for bounded types. However, it does not interact nicely with the typing rules of  System \fsub\!  for reasons we shall now discuss,

\subsection{Typing Bounded Quantification}
The raw terms of System \fsub are given by the grammar:
$$t::= \topp\ |\ x\ |\ \lambda (x:T).t\ |\ \Lambda (X \leqt T).t\ |\  t\spc t\ |\ t\{T\}$$  
where $x$ ranges over the term variables, $X$ over the type variables, and  $T$ over the raw types of  System \fsub\!. We write $\Lambda X.t$ for $\Lambda (X \leqt \top).t$ and identify terms up to $\alpha$-conversion.

A term-in-context (or just a term, for short) $\Theta \vdash t$ consists of a well-formed context $\Theta$  and a raw term $t$. Typing judgments $\Theta \vdash t:T$, which are derived according to the rules in Table \ref{tj}, associate a term-in-context $\Theta \vdash t$ to a well-formed type  over the same context $\Theta \vdash T$. 
Thanks to the typing rule of \emph{subsumption} ($\mathsf{sub}$), the problem of determining whether a given typing judgment is derivable (typechecking) depends on the subtyping problem: e.g. for any types $\Theta \vdash S,S',T$,  the term $\Theta,x:S, f:S'\rightarrow T \vdash f\spc x$ is typable (with $T$) if and only if $\Theta \vdash S \leqt S'$. 

The (sound and complete) typechecking procedure for System \fsub (which is described in more detail in Section \ref{subt}) is based on finding a  \emph{minimal type}  for each typable term.
\begin{definition}$\Theta \vdash T$ is a minimal type for the term $\Theta \vdash t$  if   $\Theta \vdash t:T$, and if   $\Theta \vdash t:T'$ then $\Theta \vdash T \leqt T'$.
\end{definition}
System \fsub  possesses the \emph{minimal typing property} with respect to the original subtyping rule: every term which can be typed has a minimal type.     Thus we may check the typing $\Theta \vdash t:T$ by finding a minimal type $\Theta \vdash S$ for   $\Theta \vdash t$ (for which there is  a sound and complete algorithm) and checking $\Theta \vdash S \leqt T$.  The minimal typing algorithm for System \fsub restricts straightforwardly to Kernel \fsub\!. Since the subtyping problem for Kernel \fsub is decidable, this gives an efficient typechecking procedure which terminates on all inputs. However:
\begin{proposition}\label{ghelli} System \fsubtop does not possess the minimal typing property.
\end{proposition}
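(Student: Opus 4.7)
The plan is to prove the proposition by exhibiting a counterexample, following the classical line of argument due to Ghelli. My goal is to produce a term $t$ in context $\Theta$ together with two derivable typings $\Theta \vdash t : S_1$ and $\Theta \vdash t : S_2$ such that no type $T$ with $\Theta \vdash t : T$ is a common subtype of $S_1$ and $S_2$ in \fsubtop.

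The mechanism to exploit is the asymmetry introduced by $\mathsf{\forall-Top}$: the bodies of quantified types are compared under the weakened assumption $X \leqt \top$, independently of the actual bound that either side places on $X$. Consequently, the subsumption rule may independently lower a quantifier's bound (contravariantly) and weaken its body, creating distinct derivation paths for a single term. For a term whose typing involves a type application $f\{T\}$ or a type abstraction $\Lambda(X \leqt S).t_0$, interleaving subsumption with the quantifier rules in different orders can be expected to produce incomparable typings whose meet exists (since \fsubtop admits meets of bounded types, as noted in the background) but does not correspond to any single typing derivation of the original term.

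The strategy has three steps. First, I would identify a term admitting two typings $S_1, S_2$ that genuinely diverge through different applications of $\mathsf{\forall-Top}$ subsumption --- likely a term of the form $\Lambda(X \leqt S).t_0$ or one containing a type application, where the contravariant choice of bound and the covariant choice of body can vary independently. Second, I would compute the meet $S_1 \wedge S_2$ using the lattice structure on \fsubtop types, noting that the bound of the meet is the join of the original bounds. Third, I would argue by inversion on typing derivations --- constrained by the syntactic shape of $t$ --- that this meet type cannot itself be derivable, because the form of $t$ forces any derivable type to have a specific quantifier structure that the meet fails to match.

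The main obstacle is the third step. Characterising the full set of derivable types for a term in the presence of the subsumption rule requires a careful inversion analysis: one must track at which points in a derivation subsumption can enter and which quantifier bounds can appear there. Ruling out every possible derivation of the putative meet type --- in particular, checking that no combination of $\mathsf{\forall-Top}$ steps with the other typing rules can produce it --- is the delicate combinatorial core of the argument, and is where alternative candidate terms often fail to yield counterexamples because an unexpected derivation sneaks the meet type back in.
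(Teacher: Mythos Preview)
Your strategy—exhibit a term with two typings that admit no minimal type—is the paper's, and indeed Ghelli's. But your steps 2 and 3 misplace the work. You assume the meet $S_1 \wedge S_2$ exists and plan to show by typing inversion that it is not a derivable type for $t$; in fact \fsubtop only has meets for pairs that already share a common lower bound, and in Ghelli's counterexample the two typings have \emph{no} common lower bound at all. So there is no meet to compute and no typing inversion to perform: the argument lives entirely at the subtyping level. The concrete term is $X \leqt \top \vdash \Lambda(Z \leqt X).\lambda(y{:}Z).y$, which is typable with both $\forall^\top(Z \leqt X).(Z \rightarrow Z)$ and $\forall^\top(Z \leqt X).(Z \rightarrow X)$ (the latter via subsumption using $Z \leqt X$ on $y$). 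Any common lower bound of these two types, by the shape of the $\mathsf{\forall-Top}$ rule, must have body $T' \rightarrow T''$ with $T'' \leqt Z$ and $T'' \leqt X$ derivable in the context $X \leqt \top, Z \leqt \top$; but there $X$ and $Z$ are unrelated, so no such $T''$ exists. That is the entire proof.

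The mechanism is also sharper than your ``interleaving subsumption with the quantifier rules in different orders.'' The point is that the \emph{typing} derivation may exploit the actual bound $Z \leqt X$ inside the body, whereas the $\mathsf{\forall-Top}$ \emph{subtyping} rule discards that bound and compares bodies under $Z \leqt \top$. A subtyping fact used to build one of the typings is therefore unavailable when one tries to relate the resulting types, and that asymmetry is precisely what kills any lower bound. Your anticipated ``delicate combinatorial core'' of ruling out every derivation of a putative meet type is not a step in the actual argument.
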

\begin{proof}
The following counterexample was given by Ghelli (\cite{fsubtop} --- Appendix).\\
 In System \fsubtop\!, the typing judgments $X \leqt \top \vdash \Lambda (Z \leqt X).\lambda (y:Z).y:\forall^\top (Z \leqt X). (Z \rightarrow  Z)$ and $X \leqt \top \vdash \Lambda (Z \leqt X).\lambda (y:Z).y:\forall^\top(Z \leqt X).(Z \rightarrow  X)$ are both derivable, but these types have no lower bound and so $X \leqt \top \vdash \Lambda (Z \leqt X).\lambda (y:Z).y$ can have no minimal type.
 
Informally, it is easy to see that such a lower bound would have the form $X \leqt \top \vdash \forall^\top (Z \leqt S). T \rightarrow T'$ where $X \leqt \top,Z \leqt \top \vdash T' \leqt X$ and $X \leqt \top,Z \leqt \top \vdash T' \leqt Z$, and that no such type exists. A more formal proof can be given using an alternative, algorithmic presentation of the subtyping relation as in Table \ref{suba}.   
\end{proof}
The problem is that the subtyping assumption $Z \leqt X$ may be used to derive the \emph{typing} $X \leqt \top \vdash \Lambda (Z \leqt X).\lambda (y:Z).y:\forall^\top(Z \leqt X).(Z \rightarrow  X)$ but not the \emph{subtyping} $X \leqt \top \vdash \forall^\top (Z \leqt X). (Z \rightarrow  Z) \leqt \forall^\top (Z \leqt X). (Z \rightarrow  X)$.   
It is not known whether the type synthesis algorithm for System \fsub can be adapted to System \fsubtop\!, nor indeed whether its typechecking problem is decidable.

\begin{table}
\begin{center}
\AxiomC{\small$\Theta \vdash \top$}\RightLabel{\small $\mathsf{top}$}\UnaryInfC{\small $\Theta \vdash \topp:\top$}\DisplayProof\ \  
\AxiomC{\small $\Theta,x:T,\Theta' \vdash \top$} \RightLabel{\small $\mathsf{var}$}\UnaryInfC{\small $\Theta,x:T,\Theta' \vdash x:T$}\DisplayProof\ \   \AxiomC{\small $\Theta \vdash t:T$} \AxiomC{\small$\Theta \vdash T \leqt T'$}\RightLabel{\small $\mathsf{sub}$}\BinaryInfC{\small $\Theta  \vdash t:T'$}\DisplayProof
\end{center}
\begin{center}
\AxiomC{\small $\Theta,x:S \vdash  t:T$}\RightLabel{\small $\mathsf{\rightarrow - i}$} \UnaryInfC{\small $  \Theta\vdash \lambda x:S.t:S \rightarrow T$}\DisplayProof \ \  \AxiomC{\small $\Theta \vdash  t:S \rightarrow T$} \AxiomC{\small $\Theta \vdash s:S$} \RightLabel{\small $\mathsf{\rightarrow - e}$}  \BinaryInfC{\small $\Theta \vdash t\spc s:T$} \DisplayProof
\end{center}
\begin{center}
\AxiomC{\small $\Theta, X \leqt S \vdash t:T$}\RightLabel{\small $\mathsf{\forall - i}$} \UnaryInfC{\small$ \Theta \vdash \Lambda (X \leqt S).t:\forall (X \leqt S).T$}\DisplayProof \ \  \AxiomC{\small $\Theta \vdash t:\forall (X \leqt S).T$} \AxiomC{\small $  \Theta \vdash S' \leqt S $} \RightLabel{\small $\mathsf{\forall - e}$} \BinaryInfC{\small $\Theta \vdash t\{S'\}:T[S'/X]$}\DisplayProof
\end{center}

\begin{center}
\caption{Typing Judgments for System \fsub}\label{tj}
\end{center}
\end{table}

\section{Semantics of Bounded  Quantification}
Following the suggestion \cite{fsubtop} that an expressive yet tractable type system for bounded quantification should be grounded in semantic understanding, we seek an interpretation which relates the bounded quantifiers $\forall^\top$ and $\forall^\k$. Since System \fsubtop and Kernel \fsub are subsystems of the original System \fsub\!, they may both be interpreted in any semantics of the latter, examples of which include models based on modest sets and partial equivalence relations \cite{BruceLongo}, and on games and strategies \cite{Chrob,fos16}. However, because  these  interpret the original subtyping rule for bounded quantification they unsurprisingly  give few direct clues about decidable subtyping: the PER models are essentially realizability interpretations on an untyped model, whereas the game semantics in \cite{fos16} interprets subtyping coercions as morphisms defined inductively on the derivation of the subtyping relation. On the other hand, the latter interpretation does depend crucially on the \emph{dinaturality} of instantiation with respect to subtyping coercions (which does not hold with respect to terms in general \cite{delataillade})  and this  will also be the basis for our interpretations.  We will use the equational formulation of dinaturality in System \fsub\!, introduced   by Cardelli et. al \cite{fsub},
to give related interpretations of $\forall^\k$ and $\forall^\top$  in terms of unbounded quantification and a meet operation ($\wedge$) for the subtyping relation. 

We now define the target calculus for this translation. Let System \fwedge (cf \cite{piercethesis}) be  System \fsub\!, restricted to unbounded quantification, and extended with a  binary meet operation on types  --- i.e. its raw types are given by the grammar:
$$T::= \top\ |\ X\ |\ T \rightarrow T\ |\ \forall X.T\ |\ T \wedge T$$
The type-formation rules of Table \ref{tfr} are extended with the rule:
\begin{center}
\AxiomC{  $\Theta \vdash S$}\AxiomC{ $\Theta \vdash T$}\BinaryInfC{ $ \Theta \vdash S \wedge T$}\DisplayProof
\end{center} 
and the subtyping rules of Table \ref{stj} with the rules:
\begin{center}
\AxiomC{ $\Theta \vdash S \wedge S'$} \UnaryInfC{ $ \Theta \vdash S \wedge S' \leqt S$}\DisplayProof \ \ \AxiomC{  $\Theta \vdash S \wedge S'$}\UnaryInfC{ $\Theta \vdash  S \wedge S' \leqt S'$}\DisplayProof\ \   \AxiomC{ $\Theta \vdash T \leqt S$}\AxiomC{ $\Theta \vdash T \leqt S'$}\BinaryInfC{ $ \Theta \vdash T \leqt S \wedge S'$}   \DisplayProof   
\end{center}
The grammar of raw terms remains unchanged (except that annotating types range over System \fwedge types) and the typing rules of Table \ref{tj} are also  unchanged. 

As observed in \cite{piercethesis}, a bounded variable  $\Theta, X \leqt S \vdash X$ may be represented by the type $\Theta, X \leqt \top \vdash X \wedge S$. This is bounded above by $S$ ($\Theta,X \leqt \top \vdash X \wedge S \leqt S$) and if $\Theta \vdash S' \leqt S$ then $\Theta \vdash S' \leqt S' \wedge S \leqt S'$, so that substituting  $S'$ for $X$ in $T[X]$ is equivalent to substituting $S'$ for $X$ in $T[X \wedge S]$, up to the equivalence on types induced by the subtype preorder. This suggests an  interpretation of the bounded quantification  $\Theta \vdash \forall (X \leqt S).T$ as $ \forall X.T[X \wedge S/X] $: 
\begin{lemma}\label{kfun} If $\Theta,X \leqt S \vdash T \leqt T'$ then $\Theta \vdash \forall X.T[X \wedge S/X] \leqt \forall X.T'[X \wedge S/X]$. 
\end{lemma}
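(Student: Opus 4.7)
The plan is to reduce the claim to a standard substitution lemma for subtyping in System \fwedge, and then close under the unbounded quantifier rule. First, I would establish the lemma: if $\Theta, Y \leqt S, \Theta' \vdash T \leqt T'$ and $\Theta \vdash U \leqt S$, then $\Theta, \Theta'[U/Y] \vdash T[U/Y] \leqt T'[U/Y]$. The proof is a routine induction on the subtyping derivation. The only essential case is $\mathsf{Var}$ applied to the bound $Y \leqt S$: since $Y \notin FV(S)$ by well-formedness of the original context, the required conclusion $\Theta, \Theta'[U/Y] \vdash U \leqt S$ follows by weakening the hypothesis $\Theta \vdash U \leqt S$. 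Transitivity, arrow, and meet rules are structural; the $\mathsf{\forall-Orig}$ case goes through after an $\alpha$-renaming of the bound variable to avoid capture by free variables of $U$.

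Given the hypothesis $\Theta, X \leqt S \vdash T \leqt T'$, I would $\alpha$-rename $X$ to a fresh variable $Y$, obtaining $\Theta, Y \leqt S \vdash T[Y/X] \leqt T'[Y/X]$. Weakening the base context gives $\Theta, X \leqt \top, Y \leqt S \vdash T[Y/X] \leqt T'[Y/X]$. In context $\Theta, X \leqt \top$ the meet-projection rule supplies $\Theta, X \leqt \top \vdash X \wedge S \leqt S$, so the substitution lemma applied with $U = X \wedge S$ and base context $\Theta, X \leqt \top$ yields
\[
\Theta, X \leqt \top \vdash T[Y/X][X \wedge S / Y] \leqt T'[Y/X][X \wedge S / Y],
\]
which, by freshness of $Y$, simplifies to $\Theta, X \leqt \top \vdash T[X \wedge S / X] \leqt T'[X \wedge S / X]$.

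Finally, $\mathsf{\forall-Orig}$ with both bounds equal to $\top$ (the side premise $\Theta \vdash \top \leqt \top$ coming from $\mathsf{Refl}$) concludes $\Theta \vdash \forall X.T[X \wedge S/X] \leqt \forall X.T'[X \wedge S/X]$, as required. The main pitfall is purely one of bookkeeping: we cannot substitute $X \wedge S$ directly for $X$ along the assumption $X \leqt S$, because the unbounded $X$ that ends up bound in the target of the claim shares its name with the bounded $X$ in the hypothesis; the $\alpha$-renaming in the first step of the application avoids any clash, after which the proof is a direct consequence of well-known properties of substitution for System \fsub\!.
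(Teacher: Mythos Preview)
Your proof is correct and rests on the same core idea as the paper: an induction on the subtyping derivation in which the $\mathsf{Var}$ case for the distinguished variable is discharged by the bound $X \wedge S \leqt S$. The paper packages this slightly differently, proving directly that $\Theta,X \leqt S,\Theta' \vdash T \leqt T'$ implies $\Theta,X \leqt \top,\Theta'[X\wedge S/X] \vdash T[X \wedge S/X] \leqt T'[X \wedge S/X]$ (keeping the name $X$ but relaxing its bound to $\top$), whereas you first isolate the standard type-substitution lemma for subtyping and then instantiate it after an $\alpha$-renaming. Your decomposition is a touch more modular (the substitution lemma is reusable elsewhere) at the cost of the renaming bookkeeping you correctly flag; the paper's direct induction avoids that detour because once every occurrence of $X$ has been replaced by $X\wedge S$, the original bound on $X$ is no longer needed and can simply be loosened to $\top$ in place.
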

\begin{proof}
We show that if  $\Theta,X \leqt S,\Theta' \vdash T \leqt T'$ then $\Theta,X \leqt \top,\Theta'[X\wedge S/X] \vdash T[X \wedge S/X] \leqt T'[X \wedge S/X]$ by induction on the derivation of $\Theta,X \leqt S,\Theta' \vdash T' \leqt T'$. 
\end{proof}
So if $\Theta,X \leqt S,\Theta' \vdash T \leqt T'$ then $\Theta \vdash \forall X.T[X \wedge S/X] \leqt  \forall X.T'[X \wedge S/X]$ --- i.e. this
interpretation satisfies the subtyping rule $\mathsf{\forall-Fun}$. 
Accordingly, we define:
\begin{definition}Let $\ulcorner \forall^\k (X \leqt S).T \urcorner \triangleq \forall X.T[X \wedge S/X]$.  
\end{definition}

\subsection{Interpretation of $\forall^\top$} 
The above interpretation of bounded quantification does not satisfy the $\mathsf{\forall - Top}$ subtyping rule because it is not antitone in the variable bound --- e.g. $S' \leqt S$ does not generally imply  $\ulcorner \forall^\k (X \leqt S).X \rightarrow X \urcorner = \forall X.X \wedge S \rightarrow X \wedge S \not \leqt  \forall X.X \wedge S' \rightarrow X \wedge S' =  \ulcorner \forall^\k (X \leqt S').X \rightarrow X \urcorner $.
The problem is that substitution into a type is not antitone with respect to the subtyping order: from a semantic viewpoint, types do not act as contravariant (nor covariant) functors with respect to the subtype preorder. A solution is to separate \emph{positive} and \emph{negative} occurrences of type-variables, such that substitution of the former is monotone, and of the latter is antitone with respect to the subtype preorder. In other words  types act as mixed-variance functors with respect to substitution of type variables (in the next section, we will see that this leads to an interpretation of terms as dinatural transformations).       Accordingly, we define a mixed substitution operation on types of System \fwedge which acts separately on the positive and negative occurences of type variables.
\begin{definition}Given raw types $S_-,S_+,T$, we assume, by $\alpha$-conversion, that neither $X$ nor any free variables of $S_,S_+$ are quantified in $T$.
 Let $T[(S_-,S_+)/X]$ be the substitution of $S_-$ and $S_+$ for the negative and  positive occurrences of $X$ in $T$, respectively, so that $T[S/X] = T[(S,S)/X]$.     
Formally: \begin{center}
\begin{tabular}{c c c}
$Y[(S_-,S_+)/X] =  \begin{cases} S_+ & \text{ if } Y \equiv X\\Y & \text{ otherwise} \end{cases}$ & & $\top[(S_-,S_+)/X] = \top$ \\
$(T \rightarrow T')[(S_-,S_+/X] = T[(S_+,S_-)/X] \rightarrow T'[(S_-,S_+)/X]$ & & $(\forall Y.T)[(S_-,S_+)/X] = \forall Y.T[(S_-,S_+)/X]$ \\
$(T \wedge T')[(S_-,S_+)/X] = T [(S_-,S_+)/X]\wedge T'[(S_-,S_+)/X]$    
\end{tabular}
\end{center}

\end{definition}  
This is antitone in $S_-$ and monotone in $S_+$ and $T$ with respect to the subtyping preorder. 
\begin{lemma}\label{fsubtop1}If $\Theta,\Theta'   \vdash S_-'\leqt S_-$ and  $\Theta,\Theta'   \vdash S_+\leqt S_+'$ then for any \fwedge-type $\Theta,X \leqt \top \vdash T$,  $$\Theta,\Theta' \vdash T[(S_-,S_+)/X] \leqt  T[(S_-',S_+')/X]$$ 
\end{lemma}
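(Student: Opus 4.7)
The plan is to proceed by induction on the structure of the raw type $T$. The hypothesis of the induction statement should be strengthened so that the inductive step for $\forall Y.T'$ goes through: I state it with an arbitrary ambient extension $\Theta''$ of the context, so that we may apply the IH under a binder. At each step, I use the given subtyping hypotheses (after weakening into the enlarged context) together with the relevant subtyping rule of \fwedge, invoked on the two types produced by the substitution.

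The base cases are immediate. If $T \equiv \top$ or $T \equiv Y$ for $Y \not\equiv X$, then both sides of the conclusion are syntactically identical and we conclude by $\mathsf{Refl}$. If $T \equiv X$, then $T[(S_-,S_+)/X] \equiv S_+$ and $T[(S_-',S_+')/X] \equiv S_+'$, and the desired subtyping is exactly one of the hypotheses. The cases for meets and for $\forall Y.T'$ are routine inductive steps: for $T \equiv T_1 \wedge T_2$, I apply the IH to each component to obtain two subtyping judgments, and then combine them with the greatest-lower-bound rules of \fwedge; for $T \equiv \forall Y.T'$, I weaken the two input subtyping hypotheses by $Y \leqt \top$ (permissible because $\alpha$-conversion has been used to ensure $Y$ is fresh for $S_-, S_+, S_-', S_+'$), apply the IH in the extended context, and then close with the $\mathsf{\forall-Orig}$ rule using $\top \leqt \top$ for the bound premise.

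The only step that requires real care is the arrow case $T \equiv T_1 \rightarrow T_2$. Unfolding the mixed substitution gives
\[
T[(S_-,S_+)/X] \;=\; T_1[(S_+,S_-)/X] \rightarrow T_2[(S_-,S_+)/X],
\]
and similarly for $(S_-',S_+')$. To apply the $\rightarrow$-subtyping rule I need the \emph{contravariant} subtyping on the domain, namely $T_1[(S_+',S_-')/X] \leqt T_1[(S_+,S_-)/X]$, and the covariant one on the codomain. For the codomain I apply the IH directly. For the domain, I apply the IH to $T_1$ \emph{with the roles of the polarity arguments swapped}: reading $T_1$ under the substitution $[(S_+,S_-)/X]$, the hypothesis ``negative bound smaller'' becomes $S_+ \leqt S_+'$ (one of our given hypotheses), and ``positive bound smaller'' becomes $S_-' \leqt S_-$ (the other). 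The IH then yields exactly the required contravariant subtyping, and the $\rightarrow$ rule closes the case. This polarity swap is the one point where it is essential to have defined mixed substitution separately rather than ordinary substitution, and it is the main place where one must be careful to keep the direction of the inequalities straight.
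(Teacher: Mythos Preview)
Your proposal is correct and follows essentially the same approach as the paper: the paper's proof is simply ``By induction on the size of $T$,'' and you have spelled out exactly that induction, including the polarity swap in the arrow case and the context-extension needed for the binder case, which the paper leaves implicit.
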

\begin{proof}
By induction on the size of $T$. 
\end{proof}
\begin{lemma}\label{fsubtop2}For \fwedge-types $\Theta, X\leqt \top \vdash T \leqt T'$ and any types $\Theta,\Theta' \vdash S_-, S_+$,   $$\Theta,\Theta' \vdash T[(S_-,S_+)/X] \leqt T'[(S_-,S_+)/X].$$
\end{lemma}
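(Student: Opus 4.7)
The plan is to argue by induction on the derivation of the subtyping judgment $\Theta, X \leqt \top \vdash T \leqt T'$. To make the induction go through when descending under quantifiers, I would first strengthen the statement to allow an extension of the context: if $\Theta, X \leqt \top, \Theta'' \vdash T \leqt T'$ and $\Theta, \Theta' \vdash S_-, S_+$, then $\Theta, \Theta', \Theta'' \vdash T[(S_-,S_+)/X] \leqt T'[(S_-,S_+)/X]$. In \fwedge the context $\Theta''$ only ever grows by appending bindings of the form $Y \leqt \top$ under a $\forall$, so no substitution needs to be propagated into $\Theta''$.

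The routine cases are $\mathsf{Top}$, $\mathsf{Refl}$, $\mathsf{Trans}$, the three meet rules, and $\mathsf{\forall\text{-}Orig}$ (which, since \fwedge quantifiers are unbounded, reduces to concluding $\forall Y. S_1 \leqt \forall Y. T_1$ from $\Theta, X \leqt \top, \Theta'', Y \leqt \top \vdash S_1 \leqt T_1$): in each case I would apply the inductive hypothesis to the premises and close with the same rule. The $\mathsf{Var}$ case splits in two: if the variable in question is $X$ itself, the judgment is $X \leqt \top$, whose substituted form $S_+ \leqt \top$ is an instance of $\mathsf{Top}$; otherwise neither the variable nor its bound mentions $X$, and the judgment is unchanged.

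The only case with any content is the arrow rule. From premises $S' \leqt S$ and $T \leqt T'$ we must derive $S[(S_+,S_-)/X] \to T[(S_-,S_+)/X] \leqt S'[(S_+,S_-)/X] \to T'[(S_-,S_+)/X]$, using the definition of mixed substitution on arrows. The right-hand component comes from applying the inductive hypothesis to $T \leqt T'$ with the pair $(S_-,S_+)$; the left-hand component comes from applying the inductive hypothesis to $S' \leqt S$ with the \emph{swapped} pair $(S_+,S_-)$, which is legitimate precisely because the IH is quantified over arbitrary $S_-, S_+$. Recognising this swap at contravariant positions is the only conceptual step in the argument; once it is in place, an application of the $\mathsf{\rightarrow}$ subtyping rule finishes the case and the rest of the induction is entirely mechanical, with no appeal to Lemma~\ref{fsubtop1} or any additional auxiliary result.
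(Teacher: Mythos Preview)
Your proposal is correct and follows exactly the approach the paper indicates: induction on the derivation of $\Theta, X \leqt \top \vdash T \leqt T'$. The paper's own proof consists of that single line, and you have accurately filled in the case analysis, including the one nontrivial observation that at the arrow rule the inductive hypothesis must be invoked with the swapped pair $(S_+,S_-)$ on the contravariant side.
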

\begin{proof}
By induction on the derivation of $\Theta, X\leqt \top \vdash T_1 \leqt T_2$. 
\end{proof}

Thus we may interpret  $\forall^\top (X\leqt S).T$ in terms of unbounded quantification  by substituting only negative occurrences of $X$ in $T$ with $X \wedge S$. 
\begin{definition}Writing $T[S/X^-]$ for $T[(S,X)/X]$, let $\ulcorner \forall^\top (X\leqt S).T \urcorner \triangleq \forall X.T[X \wedge S/X^-]$.
\end{definition}
This interpretation of bounded quantification satisfies the $\mathsf{\forall-Top}$ subtyping  rule:
\begin{lemma}If $\Theta \vdash S' \leqt S$ and $\Theta, X\leqt \top \vdash T \leqt T'$ then $\Theta \vdash \forall X.T[X \wedge S/X^-]  \leqt  \forall X.T[X \wedge S'/X^-]$   
\end{lemma}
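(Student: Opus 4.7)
The plan is to observe first that the conclusion as stated does not mention $T'$ at all, so the hypothesis $\Theta, X \leqt \top \vdash T \leqt T'$ is not actually required; only $\Theta \vdash S' \leqt S$ plays a role. The proof therefore reduces to showing that the interpretation $T[X \wedge (-)/X^-]$ is antitone in its argument, and then lifting this fact under the unbounded quantifier $\forall X.(-)$.

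First I would establish the auxiliary subtyping $\Theta, X \leqt \top \vdash X \wedge S' \leqt X \wedge S$ from the hypothesis $\Theta \vdash S' \leqt S$. This is routine using the meet rules added to System \fwedge\!: one has $\Theta, X \leqt \top \vdash X \wedge S' \leqt X$ by the first projection rule, and $\Theta, X \leqt \top \vdash X \wedge S' \leqt S'$ by the second, which composes via $\mathsf{Trans}$ with (the weakening of) $\Theta \vdash S' \leqt S$ to give $\Theta, X \leqt \top \vdash X \wedge S' \leqt S$. The universal property of $\wedge$ then yields $\Theta, X \leqt \top \vdash X \wedge S' \leqt X \wedge S$.

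Next I would invoke Lemma \ref{fsubtop1} with the instantiation $S_- := X \wedge S$, $S_-' := X \wedge S'$, and $S_+ := S_+' := X$ (so that $\Theta, X \leqt \top \vdash S_+ \leqt S_+'$ holds by $\mathsf{Refl}$). The antitone hypothesis is precisely the subtyping established in the previous step. This gives
\[
\Theta, X \leqt \top \vdash T[(X \wedge S,\,X)/X] \leqt T[(X \wedge S',\,X)/X],
\]
which by the notational convention $T[S/X^-] = T[(S,X)/X]$ is $\Theta, X \leqt \top \vdash T[X \wedge S/X^-] \leqt T[X \wedge S'/X^-]$.

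Finally I would apply the subtyping rule for $\forall$ in the case where both bounds are $\top$: the instance of $\mathsf{\forall-Orig}$ with $S_0 = T_0 = \top$ requires only $\Theta \vdash \top \leqt \top$ (by $\mathsf{Refl}$) and the body subtyping just derived, and it concludes $\Theta \vdash \forall X.T[X \wedge S/X^-] \leqt \forall X.T[X \wedge S'/X^-]$, as required. The only nontrivial content is the antitonicity-in-negative-substitution property already packaged by Lemma \ref{fsubtop1}, so no new obstacle arises; the hypothesis $T \leqt T'$ is simply redundant for the claim as stated.
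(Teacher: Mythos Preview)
Your argument is correct for the lemma as literally stated, and the antitonicity step via Lemma~\ref{fsubtop1} is exactly what the paper does. However, the lemma contains a typo: the intended conclusion is
\[
\Theta \vdash \forall X.\,T[X \wedge S/X^-]\ \leqt\ \forall X.\,T'[X \wedge S'/X^-],
\]
with $T'$ on the right, since the whole point of the lemma is to verify that the interpretation $\ulcorner \forall^\top(X\leqt S).T\urcorner = \forall X.\,T[X\wedge S/X^-]$ validates the rule $\mathsf{\forall\text{-}Top}$, whose conclusion compares $\forall^\top(X\leqt S_0).S_1$ with $\forall^\top(X\leqt T_0).T_1$ for possibly distinct bodies. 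The paper's own proof confirms this: after obtaining $\Theta,X\leqt\top \vdash T[X\wedge S/X^-] \leqt T[X\wedge S'/X^-]$ by Lemma~\ref{fsubtop1} (your step), it then invokes Lemma~\ref{fsubtop2} to pass from $T$ to $T'$, and chains the two by transitivity before quantifying.

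So the ``redundant hypothesis'' observation is really a signal that the statement is misprinted. To complete the intended proof you need one extra line: from $\Theta,X\leqt\top \vdash T\leqt T'$ and Lemma~\ref{fsubtop2} (substitution preserves subtyping in the body), deduce $\Theta,X\leqt\top \vdash T[X\wedge S'/X^-] \leqt T'[X\wedge S'/X^-]$, then compose with your inequality and apply the $\forall$-rule as you already do.
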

\begin{proof}
$\Theta \vdash S' \leqt S$ implies $\Theta,X \leqt \top \vdash  X \wedge S' \leqt X \wedge S$.\\
So $\Theta, X \leqt \top \vdash T[X \wedge S/X^-] \leqt T[X \wedge S'/X^-]$ by Lemma \ref{fsubtop1},\\
$\Theta, X \leqt \top \vdash T[X \wedge S'/X^-] \leqt T'[X \wedge S'/X^-]$ by Lemma \ref{fsubtop2}\\
 and so $\Theta \vdash \forall X.T[X \wedge S/X^-]  \leqt  \forall X.T'[X \wedge S'/X^-]$  as required. 
\end{proof}
This rule does not satisfy $\mathsf{\forall-Fun}$ because positive occurences of a variable are not interpreted as subtypes of their bounds.  For example, $\ulcorner \forall^\top (X \leqt S).X\urcorner $ is not a subtype of $\ulcorner \forall^\top (X \leqt S).S \urcorner$.
\subsection{Relating $\forall^\k$ and $\forall^\top$}
From the interpretations of $\forall^\k$ and $\forall^\top$ within the target calculus System \fwedge, we derive a subtyping rule which relates them.  
\begin{lemma}If  $\Theta \vdash T_0 \leqt S_0$ and $\Theta,  X\leqt S_0  \vdash S_1 \leqt T_1$ then $\Theta \vdash \forall X.  S_1[X \wedge S_0/X] \leqt \forall X.T_1[X \wedge T_0/X^-]$. 
\end{lemma}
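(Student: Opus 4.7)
The plan is to prove the subtyping by transitivity, routing through the intermediate unbounded quantification $\forall X.T_1[X \wedge S_0 / X]$. The first leg handles the body-subtyping step (essentially the $\forall^\k$-style interpretation) and the second leg adjusts the bound from $S_0$ to $T_0$ (the contravariant, $\forall^\top$-flavoured step).

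For the first leg, apply Lemma \ref{kfun} directly to the hypothesis $\Theta, X \leqt S_0 \vdash S_1 \leqt T_1$ to obtain
$$\Theta \vdash \forall X.S_1[X \wedge S_0/X] \leqt \forall X.T_1[X \wedge S_0/X].$$

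For the second leg, it suffices to derive $\Theta, X \leqt \top \vdash T_1[X \wedge S_0/X] \leqt T_1[X \wedge T_0/X^-]$, since the unbounded instance of $\mathsf{\forall\text{-}Orig}$ then lifts this to the required subtyping between quantifications. Rewriting both sides in mixed-substitution form (recognising $T_1[X \wedge S_0/X]$ as $T_1[(X \wedge S_0, X \wedge S_0)/X]$ and $T_1[X \wedge T_0/X^-]$ as $T_1[(X \wedge T_0, X)/X]$), the goal becomes an instance of Lemma \ref{fsubtop1} applied to $T_1$. The two side conditions it imposes are: in the negative position, $\Theta, X \leqt \top \vdash X \wedge T_0 \leqt X \wedge S_0$, which follows from the hypothesis $\Theta \vdash T_0 \leqt S_0$ together with the defining properties of the meet (since $X \wedge T_0 \leqt X$ and $X \wedge T_0 \leqt T_0 \leqt S_0$, we get $X \wedge T_0 \leqt X \wedge S_0$); and in the positive position, $\Theta, X \leqt \top \vdash X \wedge S_0 \leqt X$, which is immediate from the first projection rule for meets.

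Putting the two legs together by transitivity yields the claim. I do not anticipate a genuine obstacle: the argument is bookkeeping that reconciles the two interpretation conventions. The only delicate point is noticing that the uniform substitution $T_1[X \wedge S_0/X]$ must be re-read as the mixed substitution $T_1[(X \wedge S_0, X \wedge S_0)/X]$ so that Lemma \ref{fsubtop1} applies directly; once that is done, everything reduces to standard meet inequalities.
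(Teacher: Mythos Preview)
Your proof is correct and follows essentially the same route as the paper: first use Lemma~\ref{kfun} to pass from $S_1$ to $T_1$ under the bound $S_0$, then use Lemma~\ref{fsubtop1} (after rewriting $T_1[X\wedge S_0/X]$ as the mixed substitution $T_1[(X\wedge S_0,X\wedge S_0)/X]$) with the meet inequalities $X\wedge T_0 \leqt X\wedge S_0$ and $X\wedge S_0 \leqt X$ to adjust the bound. The only cosmetic difference is that the paper composes the two steps at the body level (in context $\Theta, X\leqt\top$) before quantifying, whereas you quantify first and then invoke transitivity; this is immaterial.
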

\begin{proof}$\Theta, X \leqt \top \vdash  S_1[X \wedge S_0/X] \leqt  T_1[X \wedge S_0/X]$ by Lemma \ref{kfun}.\\
$\Theta,X \leqt \top \vdash X\wedge T_0 \leqt X \wedge S_0$ and $\Theta,X \leqt \top \vdash X \wedge S_0 \leqt X$ implies \\ 
$\Theta,X\leqt \top \vdash T_1[X \wedge S_0/X] = T_1[(X\wedge S_0,X\wedge S_0)/X]    \leqt T_1[(X \wedge T_0,X)/X] = T_1[X \wedge T_0/X^-]$ by Lemma \ref{fsubtop1}.\\
So $\Theta \vdash \forall X. S_1[X \wedge S_0/X] \leqt \forall X.T_1[X \wedge T_0/X^-]$ as required. 
\end{proof}
In other words, the semantics soundly interprets the following subtyping rule:
\begin{center}
\begin{prooftree}
 \AxiomC{$\Theta \vdash T_0 \leqt S_0$}\AxiomC{$  \Theta, X \leqt S_0  \vdash S_1 \leqt T_1$}\RightLabel{$\mathsf{\forall-Loc}$} \BinaryInfC{$ \Theta \vdash \forall^\k (X \leqt S_0). S_1 \leqt \forall^\top (X \leqt T_0). T_1$}
\end{prooftree}
\end{center}  
This is the rule $\mathsf{\forall-Loc}$ considered (without the  decorating superscripts)  as yet another candidate replacement for the original rule for subtyping bounded quantification in System \fsub  \cite{fsubtop}. However, the resulting type system lacks an evident subtyping algorithm, due to the failure of a key transitivity property which is essential to the subtyping algorithm for System \fsub\!. In Section \ref{subt} we show that the decorated form of the rule avoids this problem, allowing adaptation of the \fsub subtyping algorithm to the decorated calculus.
\section{System \fsubu}
We may now formally define a type system (System \fsubu\!) with   $\{\k,\top\}$-decorated bounded quantification. Raw types  are given by the grammar:
 $$T::= \top\ |\ X\ |\ T \rightarrow T\ |\ \forall^\k (X \leqt T).T\ |\ \forall^\top (X \leqt T).T $$
Subtyping judgments are given by the rules in Table \ref{stjd}, which replace the single original typing rule for bounded quantification of System \fsub with the rules  $\mathsf{\forall-Top}$, $\mathsf{\forall-Fun}$ and  $\mathsf{\forall-Loc}$.

\begin{table}
\begin{center}
 \AxiomC{\small $ \Theta, X \leqt T,\Theta' \vdash \top$}\RightLabel{\small $\mathsf{Var}$}  \UnaryInfC{\small $\Theta, X \leqt T,\Theta' \vdash X \leqt T $} \DisplayProof\ \  \AxiomC{\small $\Theta \vdash T$}\RightLabel{\small $\mathsf{Top}$} \UnaryInfC{\small $\Theta \vdash T \leqt \top$}\DisplayProof\ \   \AxiomC{\small $\Theta \vdash T$}\RightLabel{\small $\mathsf{Refl}$}  \UnaryInfC{\small $\Theta \vdash T \leqt T$}\DisplayProof\ \ 
  \AxiomC{\small $\Theta \vdash T \leqt T'$} \AxiomC{\small $  \Theta \vdash T' \leqt T''$} \RightLabel{\small $\mathsf{Trans}$} \BinaryInfC{\small$\Theta \vdash T \leqt T''$}\DisplayProof
\end{center}
\begin{center}
  \AxiomC{\small $\Theta \vdash S' \leqt S \ \  \Theta \vdash T \leqt T'$}\RightLabel{\small $\mathsf{\rightarrow}$}  \UnaryInfC{\small $ \Theta \vdash S \rightarrow T \leqt S' \rightarrow T'$}\DisplayProof \ \ 
\AxiomC{\small $ \Theta, X \leqt S \vdash T \leqt T'$} \RightLabel{\small $\mathsf{\forall-Fun}$} \UnaryInfC{\small $\Theta \vdash \forall^\k (X \leqt S).T \leqt \forall^\k(X \leqt S).T'$}
\DisplayProof
\end{center}
\begin{center}
\AxiomC{\small $\Theta \vdash T_0 \leqt S_0$}\AxiomC{\small $  \Theta, X \leqt S_0 \vdash S_1 \leqt T_1$}\RightLabel{\small $\mathsf{\forall-Loc}$} \BinaryInfC{\small $ \Theta \vdash \forall^\k (X \leqt S_0).S_1 \leqt \forall^\top(X \leqt T_0).T_1$}
\DisplayProof
\ \ 
\AxiomC{\small $\Theta \vdash T_0 \leqt S_0$}\AxiomC{\small $ \Theta, X \leqt \top \vdash S_1 \leqt T_1$} \RightLabel{\small $\mathsf{\forall-\top}$} \BinaryInfC{\small $\Theta \vdash \forall^\top (X \leqt S_0).S_1 \leqt \forall^\top(X \leqt T_0).T_1$}
\DisplayProof
\caption{Subtyping Rules for System \fsubu}   \label{stjd}
\end{center}
\end{table}
Raw terms are defined as in System \fsub (except that annotating types range over the raw \fsubu-types). Type-variable abstraction is typed using  $\forall^\k$  and instantiation is typed using  $\forall^\top$: 
  the typing rules (Table \ref{tjd}) are obtained  
 by decorating the introduction and elimination rules for bounded quantification with $\k$ and $\top$, respectively. Introduction of $\forall^\top$ and elimination of $\forall^\k$ are derivable by subsumption, e.g.
{ \begin{prooftree}
\AxiomC{$\Theta,X\leqt S \vdash t:T$}\RightLabel{$\mathsf{\forall-i}$}
\UnaryInfC{$\Theta \vdash \Lambda (X \leqt S).t:  \forall^\k(X \leqt S).T$}    
\AxiomC{}\RightLabel{$\mathsf{Refl}$} 
\UnaryInfC{$\Theta \vdash S \leqt S$} 
\AxiomC{} \RightLabel{$\mathsf{Refl}$}
\UnaryInfC{$\Theta,X \leqt S \vdash T \leqt T$}\RightLabel{$\mathsf{\forall-Loc}$} 
\BinaryInfC{$\Theta  \vdash \forall^\k (X \leqt S).T \leqt \forall^\top (X \leqt S).T$}\RightLabel{$\mathsf{sub}$}
\BinaryInfC{$\Theta \vdash \Lambda (X \leqt S).t:  \forall^\top(X \leqt S).T$} 
\end{prooftree}}


\begin{table}
\begin{center}
\AxiomC{\small$\Theta \vdash \top$}\RightLabel{\small $\top$}\UnaryInfC{\small $\Theta \vdash \topp:\top$}\DisplayProof\ \  
\AxiomC{\small $\Theta,x:T,\Theta' \vdash \top$} \RightLabel{\small $\mathsf{var}$}\UnaryInfC{\small $\Theta,x:T,\Theta' \vdash x:T$}\DisplayProof\ \   \AxiomC{\small $\Theta \vdash t:T$} \AxiomC{\small$\Theta \vdash T \leqt T'$}\RightLabel{$\mathsf{sub}$}\BinaryInfC{\small $\Theta  \vdash t:T'$}\DisplayProof
\end{center}
\begin{center}
\AxiomC{\small $\Theta,x:S \vdash  t:T$}\RightLabel{\small $\mathsf{\rightarrow - i}$} \UnaryInfC{\small $  \Theta\vdash \lambda x:S.t:S \rightarrow T$}\DisplayProof \ \  \AxiomC{\small $\Theta \vdash  t:S \rightarrow T$} \AxiomC{\small $\Theta \vdash s:S$} \RightLabel{\small $\mathsf{\rightarrow - i}$}  \BinaryInfC{\small $\Theta \vdash t\spc s:T$} \DisplayProof
\end{center}
\begin{center}
\AxiomC{\small $\Theta, X \leqt S \vdash t:T$}\RightLabel{\small $\mathsf{\forall - i}$} \UnaryInfC{\small$ \Theta \vdash \Lambda (X \leqt S).t:\forall^\k (X \leqt S).T$}\DisplayProof \ \  \AxiomC{\small $\Theta \vdash t:\forall^\top (X \leqt S).T$} \AxiomC{\small $  \Theta \vdash S' \leqt S $} \RightLabel{\small $\mathsf{\forall - i}$} \BinaryInfC{\small $\Theta \vdash t\{S'\}:T[S'/X]$}\DisplayProof
\end{center}
\caption{Typing Judgments for System \fsubu }\label{tjd}

\end{table}



The typing of Ghelli's example (Proposition \ref{ghelli}) in System \fsubu illustrates how it repairs the failure of the minimal typing property in \fsubtop\!. Recall that the term-in-context $X \leqt \top \vdash \Lambda (Z \leqt X).\lambda (y:Z).y$ may be typed with either of the \fsubtop-types $X \leqt \top \vdash \forall^\top (Z \leqt X). (Z \rightarrow  Z)$ and $X \leqt \top \vdash \forall^\top (Z \leqt X). (Z \rightarrow  X)$, which are not bounded below by any \fsubtop-type. 

In \fsubu\!, $X \leqt \top \vdash \Lambda (Z \leqt X).\lambda (y:Z).y$ has the minimal  type $X \leqt \top \vdash \forall^\k (Z \leqt X). (Z \rightarrow  Z)$: using the $\mathsf{\forall-Loc}$ rule, we may derive both:
{{\begin{center}
\begin{prooftree}
 \AxiomC{$X \leqt \top \vdash X \leqt X$}  \AxiomC{$X \leqt \top,Z \leqt X \vdash Z \rightarrow Z \leqt  Z \rightarrow Z$} \RightLabel{$\mathsf{\forall-Loc}$}
\BinaryInfC{$X \leqt \top \vdash \forall^\k (Z \leqt X).(Z \rightarrow Z)  \leqt \forall^\top (Z \leqt X).(Z \rightarrow Z) $}  
\end{prooftree}
\end{center}}}
and 
{{\begin{center}
\begin{prooftree}
 \AxiomC{$X \leqt \top \vdash X \leqt X$}  \AxiomC{$X \leqt \top,Z \leqt X \vdash Z \rightarrow Z \leqt  Z \rightarrow X$}  \RightLabel{$\mathsf{\forall-Loc}$}
\BinaryInfC{$X \leqt \top \vdash \forall^\k (Z \leqt X).(Z \rightarrow Z)  \leqt \forall^\top (Z \leqt X).(Z \rightarrow X) $}  
\end{prooftree}
\end{center}}}
We will show that  System \fsubu possesses the minimal typing property in Section \ref{subt}.

\section{Semantics of System \fsubu}     
We will consider subtyping and typechecking algorithms for System \fsubu in the following section, showing that these are quite well-behaved. First, we take its semantic justification further, by showing that the  interpretation of $\forall^\k$ and $\forall^\top$ in System \fwedge may be extended to terms.

We interpret the $\forall^\top$-elimination rule, using the fact that $\Theta \vdash S' \leqt S$ implies $\Theta \vdash S' \wedge S \leqt S' \leqt S'\wedge S$.
\begin{proposition}If $\Theta \vdash t: \ulcorner \forall^\top X \leqt S.T \urcorner$ and $\Theta \vdash S' \leqt S$  then $\Theta \vdash  t\{S'\}:T[S'/X]$. 
\end{proposition}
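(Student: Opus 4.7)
The plan is to reduce the bounded elimination to the unbounded $\forall$-elimination of System \fwedge, followed by a single application of subsumption using Lemma \ref{fsubtop1}. The key observation is that since $\Theta \vdash S' \leqt S$, we have $\Theta \vdash S' \leqt S' \wedge S$ (by the meet-introduction rule, since $S' \leqt S'$ by $\mathsf{Refl}$ and $S' \leqt S$ by hypothesis), while $\Theta \vdash S' \wedge S \leqt S'$ always; so in the subtype preorder $S'$ and $S' \wedge S$ are interchangeable.

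First, unfold the definition: $\ulcorner \forall^\top (X \leqt S).T \urcorner = \forall X.T[X \wedge S / X^-]$, so by hypothesis $\Theta \vdash t : \forall X.T[X \wedge S / X^-]$ as a well-typed term of System \fwedge. Since this is an unbounded quantification in the target calculus, the ordinary $\forall$-elimination rule applies, giving
\[ \Theta \vdash t\{S'\} : (T[X \wedge S/X^-])[S'/X]. \]
A straightforward calculation on the structure of $T$ shows that this iterated substitution equals $T[(S' \wedge S,\spc S')/X]$: the negative occurrences of $X$ in $T$ are first replaced by $X \wedge S$ and then, under $[S'/X]$, become $S' \wedge S$, while the positive occurrences go straight from $X$ to $S'$.

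Next I would apply Lemma \ref{fsubtop1} with $S_- = S' \wedge S$, $S_+ = S'$, $S_-' = S'$, $S_+' = S'$. The hypotheses of the lemma require $\Theta \vdash S_-' \leqt S_-$, i.e.\ $\Theta \vdash S' \leqt S' \wedge S$ (established above), and $\Theta \vdash S_+ \leqt S_+'$, i.e.\ $\Theta \vdash S' \leqt S'$ (by $\mathsf{Refl}$). The conclusion is exactly
\[ \Theta \vdash T[(S' \wedge S,\spc S')/X] \leqt T[(S',\spc S')/X] = T[S'/X]. \]

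Finally, one application of subsumption ($\mathsf{sub}$) to the typing of $t\{S'\}$ yields the required $\Theta \vdash t\{S'\} : T[S'/X]$. There is no real obstacle here beyond keeping the polarity-tracked substitution bookkeeping straight; the whole point of the $X^-$ notation and Lemma \ref{fsubtop1} is to make this kind of mixed-variance calculation routine, and the asymmetric interchange between $S'$ and $S' \wedge S$ (which is an equivalence, not a subtyping in one direction only) is exactly what is needed at the negative occurrence.
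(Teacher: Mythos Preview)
Your proof is correct and follows essentially the same route as the paper: unbounded $\forall$-elimination to obtain $t\{S'\}:T[(S'\wedge S,S')/X]$, then Lemma~\ref{fsubtop1} (using $S' \leqt S' \wedge S$ from the meet rule) to get the subtyping $T[(S'\wedge S,S')/X] \leqt T[S'/X]$, and finally subsumption. You are slightly more explicit than the paper about the iterated-substitution calculation and the precise instantiation of Lemma~\ref{fsubtop1}, but the argument is otherwise identical.
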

\begin{proof}
From  $\Theta \vdash S' \leqt S$ (and $\Theta \vdash S'\leqt S'$) we may infer
$\Theta \vdash S' \leqt S' \wedge S$ and hence by Lemma \ref{fsubtop1}  $\Theta \vdash T[(S' \wedge S, S')/X] \leqt T[S'/X] = T[(S',S')/X]$. Hence we have the following derivation of $\Theta \vdash  t\{S'\}:T[S'/X]$:
{ \begin{prooftree}
\AxiomC{$\Theta \vdash t: \forall X \leqt \top.T[X\wedge S/X^-]$}
\AxiomC{$\Theta \vdash S' \leqt \top$}\RightLabel{$\mathsf{\forall-e}$} 
\BinaryInfC{$\Theta \vdash t\{S'\}:T[(S'\wedge S,S')/X]$}    
\AxiomC{$\Theta \vdash T[(S' \wedge S, S')/X] \leqt T[S'/X]$}  \RightLabel{$\mathsf{sub}$} 
\BinaryInfC{$\Theta \vdash t\{S'\}: T[S'/X]$}
\end{prooftree}}
\end{proof}
To derive the $\forall^\k$-introduction rule, we extend the interpretation to bounded type-abstraction:
\begin{definition}Suppose $\Theta,X \leqt S \vdash t:T$. Let $\ulcorner \Lambda (X \leqt S).t \urcorner \triangleq \Lambda (X\leqt \top).t[X\wedge S/X]$.   
\end{definition}
A straightforward induction establishes that:  
\begin{lemma}\label{subst}$\Theta,  X\leqt S,\Theta'   \vdash t:T$ implies $\Theta, X\leqt \top, \Theta'[X \wedge S/X]   \vdash t[X \wedge S/X] : T[X \wedge S/X]$.  
\end{lemma}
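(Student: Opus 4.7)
The plan is a structural induction on the derivation of $\Theta, X \leqt S, \Theta' \vdash t : T$, following the template of the standard substitution lemma, with the extra wrinkle that the replacement $X \wedge S$ must be applied simultaneously to the term, to the target type, and to every type annotation in the suffix $\Theta'$.

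Before starting the induction I would record, or reuse from the proof of Lemma \ref{kfun}, the type-formation analogue: if $\Theta, X\leqt S, \Theta' \vdash U$ then $\Theta, X\leqt \top, \Theta'[X\wedge S/X] \vdash U[X\wedge S/X]$. Since the inductive proof of Lemma \ref{kfun} already ranges over an appended suffix $\Theta'$, I would invoke its subtyping conclusion directly whenever a premise is a subtyping judgment, and its type-formation counterpart when the premises include a well-formedness side-condition.

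The base cases are immediate. For $\topp$ the goal is trivial. For $\mathsf{var}$, a declaration $x:U$ inside $\Theta$ has $X$ not free in $U$ by well-formedness of $\Theta$, so $U[X \wedge S/X] = U$; a declaration in $\Theta'$ appears in the substituted context as $x : U[X \wedge S/X]$. The $\mathsf{sub}$ case combines the induction hypothesis on the typing premise with Lemma \ref{kfun} on the subtyping premise. The cases for $\mathsf{\rightarrow-i}$, $\mathsf{\rightarrow-e}$ and $\mathsf{\forall-i}$ are discharged by applying the induction hypothesis to each premise and reapplying the rule, using $\alpha$-conversion to keep the bound type variable $Y$ fresh for $X$ and $S$ so that extending the suffix by $Y \leqt S''$ commutes with the substitution on $\Theta'$.

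The $\mathsf{\forall-e}$ case calls for a little care. From the premises $\Theta, X\leqt S, \Theta' \vdash t : \forall (Y \leqt S'').U$ and $\Theta, X\leqt S, \Theta' \vdash S' \leqt S''$ I obtain, by the induction hypothesis and Lemma \ref{kfun}, a typing of $t[X\wedge S/X]$ at $\forall (Y \leqt S''[X\wedge S/X]).U[X\wedge S/X]$ together with $\Theta, X\leqt \top, \Theta'[X\wedge S/X] \vdash S'[X\wedge S/X] \leqt S''[X\wedge S/X]$. Applying $\mathsf{\forall-e}$ yields a term of type $U[X\wedge S/X]\,[S'[X\wedge S/X]/Y]$, and I would close with the standard substitution-commutation identity $U[S'/Y][X\wedge S/X] = U[X\wedge S/X]\,[S'[X\wedge S/X]/Y]$, which holds because $Y$ is fresh for $X$ and $S$ by $\alpha$-conversion, so no capture or iterated substitution occurs. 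This commutation is the one step where I would pause; it is the main, and very minor, obstacle. Every other step is a direct use of the induction hypothesis and Lemma \ref{kfun}.
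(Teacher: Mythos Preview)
Your proposal is correct and matches the paper's approach: the paper simply says ``a straightforward induction establishes that'' without further detail, and what you have written is precisely the natural unfolding of that induction on the typing derivation, invoking the generalized form of Lemma~\ref{kfun} (with the suffix $\Theta'$) for the subtyping premises.
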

and hence:
\begin{proposition}If $\Theta,  X\leqt S \vdash t:T$ then $\Theta  \vdash \ulcorner \Lambda (X \leqt S).t \urcorner  : \ulcorner \forall^\k X \leqt S.T \urcorner$.
\end{proposition}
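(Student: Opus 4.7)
The plan is to unfold the two bracket translations and then apply Lemma \ref{subst} essentially as-is. By definition $\ulcorner \Lambda(X \leqt S).t \urcorner$ is $\Lambda(X \leqt \top).t[X \wedge S/X]$ and $\ulcorner \forall^\k X \leqt S.T \urcorner$ is $\forall X.T[X \wedge S/X]$, which we read as $\forall(X \leqt \top).T[X \wedge S/X]$ under the convention that unbounded quantification abbreviates a $\top$-bound. So the statement to prove reduces to
$$\Theta \vdash \Lambda(X \leqt \top).t[X \wedge S/X] : \forall(X \leqt \top).T[X \wedge S/X].$$

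First I would invoke Lemma \ref{subst} directly (taking the empty $\Theta'$ suffix) on the hypothesis $\Theta, X \leqt S \vdash t:T$ to obtain
$$\Theta, X \leqt \top \vdash t[X \wedge S/X] : T[X \wedge S/X].$$
Then I would apply the $\mathsf{\forall\text{-}i}$ typing rule from Table \ref{tjd} (with bound $\top$) to this derivation, which yields exactly the required judgment. The only tiny side condition is that $\Theta \vdash \top$ is a well-formed type in $\Theta$, which holds by the $\top$ type-formation rule as soon as $\Theta$ is a well-formed context, and that follows from well-formedness of $\Theta, X \leqt S$.

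There is essentially no obstacle here: all the real work is already packaged inside Lemma \ref{subst}, which handles the simultaneous rewriting of both the term and the type under the change of bound from $S$ to $\top$. The proposition is just the composition of that lemma with a single application of $\mathsf{\forall\text{-}i}$, plus the notational identification of $\forall X.U$ with $\forall(X \leqt \top).U$.
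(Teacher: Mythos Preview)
Your proposal is correct and matches the paper's approach: the proposition is stated immediately after Lemma \ref{subst} with the words ``and hence'', so the intended argument is exactly what you describe --- unfold the definitions, apply Lemma \ref{subst} with empty $\Theta'$, then apply $\mathsf{\forall\text{-}i}$. One small correction: the target calculus for the interpretation is System \fwedge, whose typing rules are those of Table \ref{tj} (undecorated), not Table \ref{tjd}; the shape of the $\mathsf{\forall\text{-}i}$ rule is identical, so this does not affect your argument.
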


\subsection{Soundness via Dinaturality}
We show that these interpretations of type-abstraction and instantiation are sound with respect to $\beta$ and $\eta$ equivalences using 
 the equational theory for System \fsub  introduced by Cardelli et. al. \cite{fsub}, which we adapt to System \fsubu\!.  Derivation rules (for equational judgements in context $\Theta \vdash t = t':T$, where $\Theta \vdash t:T$ and  $\Theta \vdash t':T$) are given in Table \ref{eqr}.   
\begin{table}
\begin{center}

\AxiomC{\small $\Theta \vdash t:\top$}\AxiomC{ \small$\Theta \vdash t':\top$}\RightLabel{\small $\mathit{\top}$} \BinaryInfC{\small $ \Theta \vdash t = t':\top$} \DisplayProof\ \ \  \AxiomC{\small $\Theta \vdash t:T$} \RightLabel{\small $\mathit{refl}$}\UnaryInfC{\small$\Theta \vdash t = t:T$}  \DisplayProof\ \ \ 
\AxiomC{\small$\Theta \vdash r = s:T$}\AxiomC{\small $  \Theta \vdash s = t:T$} \RightLabel{\small $\mathit{trans}$} \BinaryInfC{\small $ \Theta \vdash r = t:T$} \DisplayProof\ \ \ 
\AxiomC{\small $\Theta \vdash s = t:T$} \RightLabel{\small $\mathit{sym}$}\UnaryInfC{\small$\Theta \vdash t = s:T$} \DisplayProof
\end{center}
\begin{center}
\AxiomC{\small $\Theta,x:S \vdash t:T$}\AxiomC{\small$ \Theta \vdash s:S$}\RightLabel{\small $\mathit{\beta_1}$}\BinaryInfC{\small $ \Theta \vdash (\lambda x.t)\spc s = t[s/x]:T$}\DisplayProof \ \ \ \AxiomC{\small $\Theta \vdash t:S \rightarrow T$}  \LeftLabel{\small $x\not \in \dom(\Theta)$} \RightLabel{\small $\mathit{\eta_1}$} \UnaryInfC{ \small $\Theta \vdash \lambda x.(t\spc x) = t: S \rightarrow T$} \DisplayProof \end{center}
\begin{center}
\AxiomC{\small $\Theta,X \leqt S \vdash  t:T$}\AxiomC{\small $ \Theta \vdash R \leqt S$} \RightLabel{\small $\mathit{\beta_2}$} \BinaryInfC{ \small $\Theta \vdash (\Lambda (X \leqt S).t)\{R\} =   t[S'/X]:T[S'/X]$}\DisplayProof\ \ \  \AxiomC{\small $\Theta \vdash t:\forall^\top X \leqt S.T$} \LeftLabel{\small $Y\not \in \dom(\Theta)$}\RightLabel{\small $\mathit{\eta_2}$} 
\UnaryInfC{\small $ \Theta \vdash \Lambda (Y\leqt S).(t\{Y\}) = t: \forall^\top X \leqt S.T$} \DisplayProof
\end{center}
\begin{center} 
\AxiomC{\small $\Theta,X:S \vdash t =t':T$}\RightLabel{\small$\mathit{abs_1}$} \UnaryInfC{\small$ \Theta \vdash \lambda x:S.t = \lambda x.t':S \rightarrow T$}  \DisplayProof\ \ \   \AxiomC{\small $\Theta,X \leqt S \vdash t = t':T$}\RightLabel{\small $\mathit{abs_2}$} \UnaryInfC{\small $\Theta \vdash \Lambda (X \leqt S).t = \Lambda (X \leqt S).t':\forall^\k X \leqt S.T$}\DisplayProof\ \ \ 
\AxiomC{\small $\Theta \vdash t = t':S \rightarrow T $}\AxiomC{\small $\Theta \vdash s = s':S$}\RightLabel{\small $\mathit{app_1}$} \BinaryInfC{ \small$\Theta \vdash t\spc s = t'\spc s':T$}\DisplayProof  
\end{center}
\begin{center}
\AxiomC{\small $\Theta \vdash t = t':\forall^\top X \leqt S.T $}\AxiomC{\small$\Theta \vdash R \leqt S$}\AxiomC{\small$ \Theta \vdash R' \leqt S$}  \AxiomC{\small $ \Theta \vdash T[R/X] \leqt  T'$} \AxiomC{\small $\Theta \vdash T[R'/X]  \leqt T'$} \RightLabel{\small$\mathit{app_2}$}\QuinaryInfC{\small$\Theta \vdash t\{R\} = t'\{R'\}:T'  $}\DisplayProof\end{center}
\begin{center}
\caption{Derivation Rules for Term Equivalence}\label{eqr}
\end{center}
\end{table}
The rules for original System \fsub\!, and the target calculus System \fwedge, may be obtained by simply erasing the decorations on quantifiers. They axiomatize term-equality as a congruence containing  $\beta$ and $\eta$ equalities for type and term variable abstraction, together with the  rule $\mathit{app_2}$:
\begin{center}
\AxiomC{$\Theta \vdash t = t':\forall X \leqt S.T$}\AxiomC{$  \Theta \vdash R \leqt S$}\AxiomC{$ \Theta \vdash R' \leqt S$}\AxiomC{$\Theta \vdash T[R/X] \leqt  T'$}\AxiomC{$ \Theta \vdash T[R'/X]  \leqt T'$} \QuinaryInfC{$\Theta \vdash t\{R\} = t'\{R'\}:T'  $}\DisplayProof
\end{center}
from which reflexivity and $\beta$-equivalence yield the derived  rule:
\begin{center}
\AxiomC{ $\Theta,X \leqt S \vdash t:T$} \AxiomC{$ \Theta \vdash R \leqt S$}\AxiomC{$\Theta \vdash R'\leqt S$}\AxiomC{$\Theta \vdash T[R/X] \leqt T'$}\AxiomC{$   \Theta \vdash T[R'/X] \leqt T'$} \RightLabel{$(*)$} \QuinaryInfC{$ \Theta \vdash t[R/X] = t[R'/X]:T' $}\DisplayProof  
\end{center}
This relates subtyping to parametricity by expressing the \emph{extranaturality} of type instantiation with respect to subsumption.  In the setting of the $\lambda$-calculus, this is equivalent to \emph{dinaturality}: in category-theoretic terms, dinaturality  generalizes the notion of natural transformation between covariant functors to mixed-variance functors \cite{Bainbridge}. Since second-order types correspond to mixed-variance functors on the preorder of subtypes in a given context (Lemma \ref{fsubtop1}), dinaturality of instantiation 
 may be captured within the equational theory for System \fwedge\ \! by the derived rule: 

\begin{center}
{\LARGE $\Theta,X \leqt \top  \vdash t:S \rightarrow T \ \ \ \Theta \vdash R \leqt R'  \over \Theta \vdash t[R/X] = t[R'/X]:S[(R',R)/X] \rightarrow  T[(R,R')/X] $}   
\end{center}

Diagrammatically, this is the commuting hexagon:
$$\xymatrix@R=15pt@C=18pt{ &  S[R/X] \ar[r]^{t[R/x]} &  T[R/X] \ar[rd]^{\leqt} \\
S[(R',R)/X] \ar[ru]^{\leqt}\ar[rd]_{\leqt} & & & T[(R,R')/X]\\
&  S[R'/X] \ar[r]^{t[R'/x]} &  T[R'/X] \ar[ru]_{\leqt}}$$

The dinaturality expressed in equation $(*)$  is crucial to showing that interpretation in the target language System \fwedge \ \! is sound with respect to the equational theory, because it equates terms which are instantiated with types which are equivalent up to subtyping equivalence (i.e. $S \leqt T$ and $T \leqt S$).
 Consider, for example, second-order $\beta$-equality: 
\begin{proposition} If  $\Theta,X \leqt S \vdash t:T$ and $\Theta \vdash S' \leqt S$ then $\Theta \vdash \ulcorner \Lambda (X \leqt S).t\urcorner\{S'\} = t[S'/X]:T[S'/X]$.  
\end{proposition}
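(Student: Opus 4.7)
Unfolding the translation, $\ulcorner \Lambda (X \leqt S).t \urcorner\{S'\}$ is the term $(\Lambda (X \leqt \top).t[X \wedge S/X])\{S'\}$. Since $\Theta \vdash S' \leqt \top$, rule $\mathit{\beta_2}$ fires, yielding
\[
\Theta \vdash (\Lambda (X \leqt \top).t[X \wedge S/X])\{S'\} \;=\; t[X \wedge S/X][S'/X] \;=\; t[(S' \wedge S)/X]
\]
at type $T[X \wedge S/X][S'/X] = T[(S'\wedge S)/X]$. So the goal reduces to proving
\[
\Theta \vdash t[(S'\wedge S)/X] \;=\; t[S'/X] \;:\; T[S'/X],
\]
and the outer equation follows by transitivity.

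This internal equality is exactly a dinaturality statement: the two instantiations $S'\wedge S$ and $S'$ are subtyping-equivalent (from $\Theta \vdash S' \leqt S$ we get $\Theta \vdash S' \leqt S' \wedge S$ by meet-introduction, while $\Theta \vdash S'\wedge S \leqt S'$ is meet-elimination), and the derived rule $(*)$ is designed to equate terms instantiated at subtyping-equivalent types. To apply $(*)$, I will take $R := S'\wedge S$ and $R' := S'$, both of which satisfy $\leqt S$ so that the hypothesis $\Theta, X \leqt S \vdash t:T$ may be used, and I will take the target type to be $T' := T[S'/X]$.

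The side-conditions on types unfold to $\Theta \vdash T[(S'\wedge S)/X] \leqt T[S'/X]$ and $\Theta \vdash T[S'/X] \leqt T[S'/X]$. The second is reflexivity. The first is an application of Lemma \ref{fsubtop1} with $(S_-,S_+) = (S'\wedge S, S'\wedge S)$ and $(S_-',S_+') = (S',S')$: the required premises $\Theta \vdash S' \leqt S'\wedge S$ and $\Theta \vdash S'\wedge S \leqt S'$ are precisely the subtyping-equivalence observed above. Rule $(*)$ then delivers $\Theta \vdash t[(S'\wedge S)/X] = t[S'/X] : T[S'/X]$, and transitivity with the $\beta_2$-step completes the argument.

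The only subtle point is ensuring the substitution calculation in the $\beta_2$-step is correct: in $t[X \wedge S/X][S'/X]$, the outer substitution hits the $X$ introduced by the inner one, so the composite acts on $t$ as replacement of $X$ by $S' \wedge S$, and similarly on $T$. No new substitution lemma is needed for this, but I would state it explicitly to make the chain of equalities unambiguous. Everything else is purely an orchestration of the dinaturality rule $(*)$ with the monotonicity Lemma~\ref{fsubtop1}, which is the core conceptual content of the proof.
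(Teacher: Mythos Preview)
Your proposal is correct and follows essentially the same route as the paper's proof: apply $\beta_2$ to reduce $\ulcorner \Lambda (X \leqt S).t\urcorner\{S'\}$ to $t[S'\wedge S/X]$, then invoke the derived dinaturality rule $(*)$ with $R = S'\wedge S$, $R' = S'$ and target type $T[S'/X]$, discharging the type side-condition $T[S'\wedge S/X] \leqt T[S'/X]$ via Lemma~\ref{fsubtop1} and the subtyping equivalence of $S'$ and $S'\wedge S$, and conclude by transitivity. The substitution calculation you flag is handled identically in the paper.
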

\begin{proof}
If $\Theta, X \leqt S \vdash  t:T$ then  $\Theta, X \leqt \top  \vdash  t:T[X \wedge S/X]$ by Lemma \ref{subst}, and the $\beta$-equivalence rule is:
{
\begin{prooftree}
\AxiomC{$\Theta,X \leqt \top  \vdash  t[X\wedge S/X]:T[S'/X]$}
\AxiomC{$\Theta \vdash S' \leqt \top$}\RightLabel{$\mathit{\beta_2}$} 
\BinaryInfC{$\Theta \vdash (\Lambda X.t[X \wedge S/X])\{S'\} = t[X\wedge S/X][S'/X]$}
\end{prooftree}}
\noindent where $t[X\wedge S/X][S'/X] \equiv t[S' \wedge S/X]$. So we need to show that $t[S'\wedge S/X]$ is equivalent to $t[S'/X]$ at type $T[S'/X]$. 

$\Theta \vdash S' \wedge S \leqt S'$, and $\Theta \vdash S' \leqt S $  implies   $\Theta \vdash S' \leqt S' \wedge S$, and so by Lemma \ref{fsubtop1} $\Theta \vdash T[S \wedge S'/X] \leqt T[S'/X]$. Thus we have the following instance of our derived rule $(*)$
\begin{center}
{\footnotesize \begin{prooftree}
\AxiomC{$\Theta, X\leqt S \vdash t:T$}
 \AxiomC{$\Theta \vdash S'\wedge S \leqt S$} 
\AxiomC{$\Theta \vdash S' \leqt S$}
\AxiomC{$\Theta \vdash T[S \wedge S'/X] \leqt T[S'/X]$} 
\AxiomC{$\Theta \vdash T[S'/X] \leqt T[S'/X]$}    
\QuinaryInfC{$\Theta \vdash t[S \wedge S'] = t[S'/X]: T[S'/X]$} 
\end{prooftree}}
\end{center}
and so by transitivity:
{\footnotesize \begin{prooftree}
\AxiomC{$\Theta \vdash (\Lambda X.t[X \wedge S/X])\{S'\} =  t[S'\wedge S/X]$}
\AxiomC{$\Theta \vdash t[S \wedge S'] = t[S'/X]: T[S'/X]$} \RightLabel{$\mathit{trans}$} 
\BinaryInfC{$\Theta \vdash  (\Lambda X.t[X\wedge S/X])\{S'\} = t[S'/X]:T[S'/X]$}
\end{prooftree}}
\end{proof}
Similarly, we use $(*)$ to show soundness with respect to second-order $\eta$-equality (at $\forall^\top$-types):
\begin{proposition}If $\Theta \vdash t:\ulcorner \forall^\top (X \leqt S).T \urcorner$ and $Y \not \in \dom(\Theta)$, then $\Theta \vdash t = \ulcorner \Lambda (Y \leqt S).t\{Y\}\urcorner:\ulcorner \forall^\top (X \leqt S).T \urcorner$.    
\end{proposition}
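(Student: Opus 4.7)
The plan is to derive the equality inside the target calculus System \fwedge using second-order $\eta$, the dinaturality rule $(*)$ (equivalently the \fwedge-instance of $\mathit{app_2}$), and the congruence rule $\mathit{abs_2}$. First I unfold the right-hand side. By definition
\[
\ulcorner \Lambda (Y\leqt S).t\{Y\} \urcorner \;=\; \Lambda Y.\,(t\{Y\})[Y\wedge S/Y] \;=\; \Lambda Y.\, t\{Y\wedge S\},
\]
the last step because $Y$ is fresh for $t$. Both sides of the target equation thus live at type $\forall X.T[X\wedge S/X^-]$.

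Next I apply the \fwedge-instance of $\mathit{\eta_2}$ to the hypothesis $\Theta\vdash t:\forall X.T[X\wedge S/X^-]$, obtaining $\Theta\vdash t = \Lambda Y.(t\{Y\}):\forall X.T[X\wedge S/X^-]$. By transitivity it then suffices to derive $\Theta\vdash \Lambda Y.(t\{Y\}) = \Lambda Y.(t\{Y\wedge S\}) : \forall X.T[X\wedge S/X^-]$, and by the congruence rule $\mathit{abs_2}$ (applied with the unbounded quantifier of \fwedge) this further reduces to
\[
\Theta,\,Y\leqt\top \;\vdash\; t\{Y\}\;=\;t\{Y\wedge S\}\;:\; T[(Y\wedge S,\,Y)/X],
\]
since $\forall Y.T[(Y\wedge S,Y)/X]$ is $\alpha$-equivalent to $\forall X.T[X\wedge S/X^-]$.

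This last equality is a direct application of $\mathit{app_2}$ with $t = t' = t$ (by reflexivity), $R := Y$, $R' := Y\wedge S$, and common supertype $T' := T[(Y\wedge S,Y)/X]$. The bound conditions $R,R'\leqt\top$ are immediate. The type obtained by instantiating at $Y$ is $(T[X\wedge S/X^-])[Y/X] = T[(Y\wedge S,Y)/X] = T'$, so that side of the comparison is reflexive. For instantiation at $Y\wedge S$ one gets $(T[X\wedge S/X^-])[Y\wedge S/X] = T[((Y\wedge S)\wedge S,\,Y\wedge S)/X]$, and Lemma \ref{fsubtop1} supplies the required subtyping to $T[(Y\wedge S,Y)/X]$: the positive slot demands $Y\wedge S\leqt Y$ (a meet projection), and the negative slot demands $Y\wedge S\leqt (Y\wedge S)\wedge S$ (by idempotence and associativity of meet, which both hold up to subtype equivalence).

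The main obstacle is simply the polarity bookkeeping around the iterated mixed substitution, showing that re-instantiating at $Y\wedge S$ is absorbed by the meet already baked into the interpretation of $\forall^\top$-elimination. The substantive step that does the work is the dinaturality equation $(*)$, used in exactly the same spirit as in the preceding proof of second-order $\beta$-equivalence.
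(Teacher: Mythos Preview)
Your proposal is correct and follows essentially the same approach as the paper: unfold the interpretation to reduce the goal to showing $\Lambda Y.t\{Y\} = \Lambda Y.t\{Y\wedge S\}$, use $\eta_2$ for the first half, and then establish the pointwise equality $t\{Y\} = t\{Y\wedge S\}$ at type $T[(Y\wedge S,Y)/X]$ via $\mathit{app_2}$ together with Lemma~\ref{fsubtop1}, closing with $\mathit{abs_2}$ and transitivity. Your polarity bookkeeping for the two instantiations is spelled out a little more explicitly than in the paper, but the argument and the key ingredients are the same.
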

\begin{proof}
The second-order $\eta$-equality rule itself yields:
{\footnotesize \begin{prooftree}
\AxiomC{$\Theta \vdash t:\forall X.T [X\wedge S/X^-]$}\RightLabel{$\mathit{\eta_2}$} 
\UnaryInfC{$\Theta \vdash t = \Lambda Y.t\{Y\}:\forall X.T[X \wedge S/X^-]$}
\end{prooftree}}
However, $\ulcorner \Lambda (Y \leqt S).t\{Y\}\urcorner = \Lambda Y.t\{S \wedge Y\}$. 
Noting that $\Theta, Y\leqt S \vdash X \wedge S \leqt (X \wedge S) \wedge S$ and hence by Lemma \ref{fsubtop1}, $\Theta, Y \leqt \top \vdash  T[((X \wedge S) \wedge S, X \wedge S)/X] \leqt T[(Y \wedge S, Y)/X]$, we may use dinaturality to infer:
\begin{center}{\footnotesize
\begin{prooftree}
\AxiomC{$\Theta \vdash t:\forall X.T [X\wedge S/X^-]$}
\UnaryInfC{$\Theta, Y \leqt \top \vdash t:\forall X.T [X\wedge S/X^-]$}

\AxiomC{$\Theta,Y \leqt \top \vdash T[(Y\wedge S,Y)/X],T[((Y \wedge S) \wedge S, Y \wedge S)/X]  \leqt   T[(Y \wedge S,Y)/X]$}
\RightLabel{$\mathit{app_2}$}
   \BinaryInfC{$\Theta, Y \leqt \top \vdash t\{Y\} = t\{Y \wedge S\}:T[(Y \wedge S,Y)/X]$}  \RightLabel{$\mathit{abs_2}$}   
\UnaryInfC{$\Theta \vdash \Lambda Y.t\{Y\} = \Lambda Y.t\{Y\wedge S\}:\forall X.T[X \wedge S/X^-]$}       

\end{prooftree}}
\end{center}
 and hence by transitivity:  
\begin{center}
{\footnotesize
\begin{prooftree}
\AxiomC{$\Theta \vdash t = \Lambda Y.t\{Y\}:\forall X.T[X \wedge S/X^-]$}
\AxiomC{$\Theta \vdash \Lambda Y.t\{Y\} = \Lambda Y.t\{Y\wedge S\}:\forall X.T[X \wedge S/X^-]$}\RightLabel{$\mathit{trans}$} 
\BinaryInfC{$\Theta \vdash t = \Lambda Y.t\{Y \wedge S\}:\forall X.T[X \wedge S/X^-] $}
\end{prooftree}}

\end{center}
\end{proof}

Soundness with respect to the remaining rules (including $\mathit{app_2}$ itself) is straightforward.
\section{Subtyping and Typechecking Algorithms for System \fsubu}
\label{subt}
Having established a semantic basis for System \fsubu\!\!, we now describe procedures for solving its subtyping and typechecking problems. These adapt readily from their analogues for System \fsub \cite{CGhe}: we give sufficient details here to show how the difficulties previously  associated with the rules   $\mathsf{\forall-Loc}$ and $\mathsf{\forall-Top}$ are avoided.
\begin{table}
\begin{center}
  \AxiomC{\small $\Theta \vdash T$} \UnaryInfC{\small $\Theta \vdash_A T \leqt \top$}\DisplayProof\ \   \AxiomC{\small $\Theta \vdash X$} \UnaryInfC{\small $\Theta \vdash_A X \leqt X$}\DisplayProof \ \ \AxiomC{\small $ \Theta, X \leqt S,\Theta' \vdash_A S \leqt T$}\RightLabel{\small $T \not \equiv \top,X$}  \UnaryInfC{\small $\Theta, X \leqt T,\Theta' \vdash_A X \leqt T $} \DisplayProof
\end{center}
\begin{center}\AxiomC{\small $\Theta \vdash_A S' \leqt S \ \  \Theta \vdash_A T \leqt T'$} \UnaryInfC{\small $ \Theta \vdash_A S \rightarrow T \leqt S' \rightarrow T'$}\DisplayProof \ \ 
\AxiomC{\small $ \Theta, X \leqt S \vdash_A T \leqt T'$}  \UnaryInfC{\small $\Theta \vdash_A \forall^\k (X \leqt S).T \leqt \forall^\k(X \leqt S).T'$}
\DisplayProof
\end{center}

\begin{center}

\AxiomC{\small $\Theta \vdash_A T_0 \leqt S_0$}\AxiomC{\small $  \Theta, X \leqt S_0 \vdash_A S_1 \leqt T_1$} \BinaryInfC{\small $ \Theta \vdash_A \forall^\k (X \leqt S_0).S_1 \leqt \forall^\top(X \leqt T_0).T_1)$}
\DisplayProof
\ \ 
\AxiomC{\small $\Theta \vdash_A T_0 \leqt S_0$}\AxiomC{\small $ \Theta, X \leqt \top \vdash_A S_1 \leqt T_1$} \BinaryInfC{\small $\Theta \vdash_A \forall^\top (X \leqt S_0).S_1 \leqt \forall^\top(X \leqt T_0).T_1)$}
\DisplayProof
\end{center}
\begin{center}
\caption{Algorithmic Derivation Rules for Subtyping Judgements}\label{suba}
\end{center}
\end{table}

The subtyping algorithm is given by defining an alternative ``algorithmic'' presentation of the subtyping relation, via a set of derivation rules (Table \ref{suba}) with the property that any subtyping judgment is the consequence of at most one rule, so that the evident search procedure for the derivation of a subtyping judgment in this system is deterministic. To show that this is sound and complete, it suffices to establish that the subtyping judgments derivable according to the rules in Tables \ref{stj} and \ref{suba} are the same, by showing that  each rule of one system is admissible in the other, and vice-versa. The only case that differs from the proof for System \fsub is to show that the transitivity rule for the subtyping relation ($\mathsf{Trans}$) is admissible in the algorithmic system.
\begin{lemma}\label{trans}If $\Theta \vdash_A R \leqt S$ and $\Theta \vdash_A S \leqt T$ then  $\Theta \vdash_A R \leqt T$.
\end{lemma}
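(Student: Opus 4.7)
The plan is to adapt the Curien--Ghelli strategy from the \fsub\! subtyping algorithm: prove transitivity by simultaneous induction with a \emph{narrowing} lemma, using the size of the cut type $S$ as the primary induction measure. Concretely, I would prove by induction on $|S|$ both transitivity as stated and narrowing (if $\Theta, X \leqt S, \Theta' \vdash_A A \leqt B$ and $\Theta \vdash_A S' \leqt S$, then $\Theta, X \leqt S', \Theta' \vdash_A A \leqt B$), permitting narrowing and transitivity at any cut type of size strictly less than $|S|$ to be used freely, and narrowing at cut $S$ itself proved by a secondary induction on the derivation being narrowed, discharging the $\mathsf{Var}$-on-$X$ case via transitivity at cut $S$.

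For transitivity, I would case-split on the last rules of the two derivations $\Theta \vdash_A R \leqt S$ and $\Theta \vdash_A S \leqt T$. The top rule, reflexivity at a variable, and $\mathsf{Var}$ cases are routine (using determinism of $\mathsf{Var}$ and the shape of algorithmic derivations of $X \leqt T$), as is the arrow case, which uses the transitivity IH twice at the smaller cut types $S_0$ and $S_1$ of $S = S_0 \rightarrow S_1$. The interesting cases are when $S$ is a quantified type. Writing $R, S, T$ with bounds $R_0, S_0, T_0$ and bodies $R_1, S_1, T_1$, the decorated rules admit exactly four rule combinations: Fun/Fun, Fun/Loc, Loc/Top, and Top/Top. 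In Fun/Fun and Fun/Loc the bounds $R_0 = S_0$ coincide, and in Top/Top the body subderivations already live in the common context $\Theta, X \leqt \top$; so in all three cases the body premise of the conclusion follows directly from the transitivity IH at cut $S_1$, and the bound premise (where applicable) from the transitivity IH at cut $S_0$.

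The Loc/Top case is the main obstacle, and is exactly the case that broke transitivity for the \emph{undecorated} $\mathsf{\forall-Loc}$ rule. Here the Loc-subderivation supplies $\Theta, X \leqt R_0 \vdash_A R_1 \leqt S_1$, while the Top-subderivation supplies $\Theta, X \leqt \top \vdash_A S_1 \leqt T_1$, so the contexts do not match. I would first narrow the Top-subderivation at cut type $\top$, using the trivial $\Theta \vdash_A R_0 \leqt \top$ together with the narrowing IH (applicable because $|\top| < |S|$), to obtain $\Theta, X \leqt R_0 \vdash_A S_1 \leqt T_1$; then the transitivity IH at cut $S_1$ yields $\Theta, X \leqt R_0 \vdash_A R_1 \leqt T_1$, and $\mathsf{\forall-Loc}$ concludes. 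Decoration is essential here: $\mathsf{\forall-Loc}$ is the only rule producing $\forall^\top$ on the right while consuming $\forall^\k$ on the left, and only $\mathsf{\forall-Top}$ can consume $\forall^\top$ on the left, so any narrowing required is always from $\top$ to a proper bound---never from a proper bound to a smaller bound, which is the direction whose composition fails for the undecorated $\mathsf{\forall-Loc}$ rule.
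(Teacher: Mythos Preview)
Your proposal is correct, but it deploys heavier machinery than the paper does. The paper proceeds by a plain induction on the sizes of the two derivations, following the Kernel \fsub\! and \fsubtop\! proofs for the uniformly-decorated quantifier cases, and only writes out the two mixed-decoration cases (your Fun/Loc and Loc/Top). In the Loc/Top case, where one must pass from $\Theta, X \leqt \top \vdash_A S_1 \leqt T_1$ to $\Theta, X \leqt R_0 \vdash_A S_1 \leqt T_1$, the paper simply asserts ``and hence'': this is legitimate because narrowing from $\top$ is \emph{trivial} --- the algorithmic $\mathsf{Var}$ rule on $X$ can never fire when the bound is $\top$ (its premise would be $\top \leqt T$ with $T \not\equiv \top$), so the same derivation tree works verbatim in the tighter context, with identical size. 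Hence the derivation-size induction goes through without appeal to a separate narrowing lemma.

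Your Curien--Ghelli-style simultaneous transitivity/narrowing argument, with primary induction on the cut type, also works, and has the virtue of making the narrowing step explicit rather than leaving it as an unexplained ``hence''. (A small remark: in your setup the $\mathsf{Var}$-on-$R$ case of transitivity does not decrease the cut type, so you need a secondary induction on the left derivation for transitivity too, not only for narrowing; you allude to this with ``primary'' but do not say it outright.) The paper's approach is lighter precisely because the decorations guarantee that the \emph{only} narrowing ever required is from $\top$, which as you yourself observe is the benign direction --- and in fact so benign that it needs no induction at all.
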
 
\begin{proof}By induction on the size of the derivations of $\Theta \vdash_A R \leqt S$ and $\Theta \vdash_A S \leqt T$. It follows the proofs for  Kernel \fsub or \fsubtop, except where $R$, $S$ and $T$ are differently quantified types (i.e. they are not all prefixed with quantifiers with the same decoration). Since it is never possible to infer $\forall^\top (X \leqt S_0). S_1 \leqt \forall^\k (X \leqt T_0). T_1$ (as this is not the consequence of any rule)   the only possibilities are:
\begin{itemize}
\item $R \equiv \forall^\k (X \leqt R_0).R_1$,  $S \equiv \forall^\k (X \leqt S_0).S_1$ and $T \equiv  \forall^\top (X \leqt T_0).T_1$. Then $R_0 \equiv S_0$ and  $\Theta \vdash_A T_0 \leqt S_0$ (so $\Theta \vdash_A T_0 \leqt R_0$)
 and $\Theta, X \leqt S_0 \vdash_A R_1 \leqt S_1$ and $\Theta, X \leqt S_0 \vdash_A S_1 \leqt T_1$. 
By induction hypothesis, $\Theta,X \leqt S_0 \vdash_A R_1 \leqt T_1$ and thus $\Theta \vdash_A \forall^\k (X \leqt R_0). R_1 \leqt  \forall^\top (X \leqt T_0).T_1$ as required.         
\item $R \equiv \forall^\k (X \leqt R_0).R_1$, and  $S \equiv \forall^\top (X \leqt S_0).S_1$ and $T \equiv  \forall^\top (X \leqt T_0).T_1$. Then $\Theta \vdash_A S_0 \leqt R_0$ and 
 $\Theta, X \leqt R_0 \vdash_A R_1 \leqt S_1$, and   $\Theta \vdash_A T_0 \leqt S_0$, and  $\Theta, X \leqt \top \vdash_A S_1 \leqt T_1$ (and hence  $\Theta, X \leqt R_0 \vdash_A S_1 \leqt T_1$). \\
 By induction hypothesis, $\Theta \vdash_A T_0 \leqt R_0$, and $\Theta,X \leqt R_0 \vdash_A R_1 \leqt T_1$ and thus $\Theta \vdash_A \forall^\k (X \leqt R_0). R_1 \leqt  \forall^\top (X \leqt T_0). T_1$ as required. 
\end{itemize}
\end{proof}
Admissibility of the transitivity rule is the property which fails for the restriction of (undecorated) System \fsub to $\mathsf{\forall-Loc}$:  note that the proof of Lemma \ref{trans} depends on the fact that if $\Theta \vdash_A R \leqt S$ and $\Theta \vdash_A S \leqt T$ are derivable, at most one of these derivations may terminate with the rule $\mathsf{\forall-Loc}$. Using \ref{trans}  we establish:
\begin{proposition}$\Theta \vdash S \leqt T$ if and only if $\Theta \vdash_A S \leqt T$.
\end{proposition}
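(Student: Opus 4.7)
The plan is to prove both directions by induction, leveraging Lemma \ref{trans} as the main engine. I would prove soundness of the algorithm ($\Leftarrow$) by induction on the derivation of $\Theta \vdash_A S \leqt T$, and completeness ($\Rightarrow$) by induction on the derivation of $\Theta \vdash S \leqt T$.

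For the $\Leftarrow$ direction, every algorithmic rule is either identical to a declarative rule (the rules for $\rightarrow$, $\forall^\k$, $\forall^\top$ and $\forall$-Loc), or is an instance of $\mathsf{Top}$ or $\mathsf{Refl}$, or, in the case of the algorithmic Var rule, is derivable by combining the declarative $\mathsf{Var}$ axiom $\Theta, X \leqt S, \Theta' \vdash X \leqt S$ with the inductive hypothesis $\Theta, X \leqt S, \Theta' \vdash S \leqt T$ through $\mathsf{Trans}$. So this direction is entirely routine.

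The $\Rightarrow$ direction requires showing that each of the declarative rules in Table \ref{stjd} is admissible in the algorithmic system. The directed rules ($\rightarrow$, $\forall^\k$, $\forall^\top$, $\forall$-Loc, $\mathsf{Top}$) match their algorithmic analogues, so the inductive step in each such case is immediate. The declarative $\mathsf{Var}$ rule is admissible as an instance of the algorithmic $\mathsf{Top}$ rule when $T \equiv \top$, of the $X \leqt X$ axiom when $T \equiv X$, and otherwise from the algorithmic Var rule applied to a reflexive derivation of $S \leqt S$. Admissibility of $\mathsf{Trans}$ is exactly Lemma \ref{trans}. This leaves $\mathsf{Refl}$: I would prove the stronger statement that $\Theta \vdash T$ implies $\Theta \vdash_A T \leqt T$, by induction on $T$. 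The atomic cases use the $\top$ and $X \leqt X$ axioms; the $\rightarrow$ case applies the algorithmic arrow rule to two IH-derived subderivations; and each $\forall$ case applies the corresponding algorithmic $\forall$-rule, with the body handled by IH.

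The main obstacle is ensuring the reflexivity induction closes on the $\forall^\top$ case, since the algorithmic $\forall$-$\top$ rule weakens the bound of $X$ in the premise to $\top$, so one needs $\Theta, X \leqt \top \vdash_A T \leqt T$ from the hypothesis $\Theta, X \leqt S \vdash T$. This is resolved by first establishing a weakening lemma showing that $\Theta, X \leqt S, \Theta' \vdash_A U \leqt V$ implies $\Theta, X \leqt \top, \Theta' \vdash_A U \leqt V$ (an easy induction on derivations, since enlarging bounds never invalidates any algorithmic rule), and applying it after the IH. A secondary subtlety is the $\forall^\k$ case of reflexivity, which goes through directly because the algorithmic rule $\mathsf{\forall-Fun}$ keeps the same bound $S$ on both sides, matching the shape produced by IH. Once reflexivity, transitivity, and the admissibility of the declarative variable rule are in place, the induction over any declarative derivation completes mechanically.
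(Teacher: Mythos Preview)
Your overall approach---showing each rule of one system is admissible in the other, with Lemma \ref{trans} handling transitivity---is exactly the paper's. The paper is simply terser: it says the proof follows that for System \fsub except for the transitivity case, and leaves reflexivity and the variable rule implicit.

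There is, however, a concrete error in your argument. The bound-weakening lemma you state---that $\Theta, X \leqt S, \Theta' \vdash_A U \leqt V$ implies $\Theta, X \leqt \top, \Theta' \vdash_A U \leqt V$---is \emph{false}. Take $\Theta = Y \leqt \top,\, X \leqt Y$: we have $\Theta \vdash_A X \leqt Y$ via the algorithmic variable rule (its premise $\Theta \vdash_A Y \leqt Y$ holds), but after replacing $X$'s bound by $\top$ the premise becomes $\top \leqt Y$, which is underivable. The algorithmic variable rule reads the bound into its premise, so enlarging a bound \emph{does} invalidate it.

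The fix is easy and leaves your outline intact. State reflexivity with the context universally quantified---``for every $\Theta$ with $\Theta \vdash T$, we have $\Theta \vdash_A T \leqt T$''---and prove it by structural induction on the raw type $T$. In the $\forall^\top(X \leqt S).T'$ case, observe that type well-formedness depends only on which variables are in scope, not on their bounds, so $\Theta, X \leqt S \vdash T'$ gives $\Theta, X \leqt \top \vdash T'$, and the inductive hypothesis instantiated at the context $\Theta, X \leqt \top$ yields $\Theta, X \leqt \top \vdash_A T' \leqt T'$ directly. No subtyping-level weakening is needed. (Incidentally, in your admissibility case for the declarative $\mathsf{Var}$ axiom the reflexive premise should be $T \leqt T$, since the bound of $X$ in that rule is $T$; and the sub-case $T \equiv X$ cannot arise, by the well-formedness side-condition on contexts.)
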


\subsection{Typechecking}
\label{minl}
The algorithm for type-synthesis in System \fsubu is also an adaptation from the algorithm for System \fsub\!. It follows a similar pattern to the subtyping algorithm: Table \ref{mintype} gives rules for deriving a unique minimal type for each typable term  $\Theta \vdash t$, via judgments of the form $\Theta \vdash_M t:T$. These make use of the following operation. 
 \begin{lemma}\label{nonatom}For any type $\Theta \vdash T$, there exists a minimal non-atomic type  $\Theta \vdash \Theta^*(T)$ such that  $\Theta \vdash T \leqt \Theta^*(T)$.
\end{lemma}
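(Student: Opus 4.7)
The plan is to define $\Theta^*(T)$ by recursion: set $\Theta^*(T) \triangleq T$ whenever $T$ is already non-atomic (i.e.\ $T$ is $\top$, an arrow, or a quantified type), and $\Theta^*(X) \triangleq \Theta^*(S)$ when $X$ is a type variable with declaration $X \leqt S$ in $\Theta$. This recursion terminates because well-formedness of $\Theta$ guarantees that if $X \leqt S$ occurs in $\Theta$, every free type variable of $S$ is declared strictly earlier than $X$; thus any chain of bound-lookups produces a strictly decreasing sequence of positions in the context, bounded by the length of $\Theta$. By construction the result is never a type variable, so $\Theta^*(T)$ is non-atomic, establishing the existence half of the statement.

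For the supertype property $\Theta \vdash T \leqt \Theta^*(T)$, when $T$ is non-atomic this is just $\mathsf{Refl}$; when $T = X$ with bound $S$, the rule $\mathsf{Var}$ gives $\Theta \vdash X \leqt S$, and transitivity with the induction hypothesis $\Theta \vdash S \leqt \Theta^*(S) = \Theta^*(X)$ completes it.

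Minimality is the main content: for any non-atomic $\Theta \vdash U$ with $\Theta \vdash T \leqt U$, I must show $\Theta \vdash \Theta^*(T) \leqt U$. I would argue this via the algorithmic subtyping system of Table~\ref{suba}, whose soundness and completeness has just been established. If $T$ is already non-atomic then $\Theta^*(T) = T$ and there is nothing to do. If $T = X$, inspect which algorithmic rule can terminate a derivation of $\Theta \vdash_A X \leqt U$: the reflexivity rule forces $U = X$, contradicting non-atomicity; the $\mathsf{Top}$ rule gives $U = \top$, in which case $\Theta \vdash \Theta^*(X) \leqt \top$ holds by $\mathsf{Top}$; and the only remaining possibility is the bound-lookup rule (with side-condition $U \not\equiv \top,X$), which yields $\Theta \vdash_A S \leqt U$ for the bound $S$ of $X$. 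The induction hypothesis then gives $\Theta \vdash_A \Theta^*(S) \leqt U$, i.e.\ $\Theta \vdash_A \Theta^*(X) \leqt U$, as required.

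The main obstacle I anticipate is the minimality argument, specifically verifying that the algorithmic rules really do exhaust the possibilities for $\Theta \vdash_A X \leqt U$ when $U$ is non-atomic, so that the bound-lookup rule is forced in all non-trivial cases. This hinges on the side-conditions of Table~\ref{suba} and on each non-reflexivity rule imposing a specific non-variable shape on the left-hand side, ensuring a clean case analysis driven by the head of $T$.
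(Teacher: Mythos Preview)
Your proposal is correct and follows essentially the same approach as the paper: the same recursive definition of $\Theta^*(T)$ and the same induction on the position of the variable's declaration in $\Theta$. The paper's proof is terser, saying only that the non-atomic case is ``immediate'' and the variable case goes ``by induction on the length of $\Theta'$''; your explicit appeal to the algorithmic subtyping rules of Table~\ref{suba} (already shown equivalent to $\vdash$) to invert $\Theta \vdash X \leqt U$ is exactly the kind of detail the paper leaves to the reader, and is the natural way to justify the minimality claim.
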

\begin{proof}
Define $\Theta \vdash \Theta^*(T)$ by:
\begin{center}
$\Theta^*(T) = \begin{cases} \Theta^*(S) & \text{if } T \equiv X \text{ and } \Theta \equiv \Theta',X \leqt S, \Theta'' \\ T & \text{otherwise} \end{cases}$
\end{center}    
If $T$ is a  non-atomic type then it is immediate that $\Theta^*(T)$ is a $\leqt$-minimal type  such that  $\Theta \vdash T \leqt \Theta^*(T)$ (i.e. for any non-atomic type $T'$, if  $\Theta \vdash T \leqt T'$ then $\Theta \vdash \Theta^*(T) \leqt T'$).
 Otherwise, $T$ is a type-variable $X$, where $\Theta \equiv \Theta',X \leqt S, \Theta''$ and we prove the lemma by induction on the length of $\Theta'$. 
\end{proof}
Any  term-in-context may be pattern-matched to the conclusion of exactly one derivation rule in Table \ref{mintype}, yielding a deterministic algorithm for synthesizing a minimal type for each typable term-in-context. Soundness and completeness of this algorithm is established by showing that:
\begin{proposition}\label{minty}$\Theta \vdash t:T$ if and only if $\Theta \vdash_M t:S$ for some $S$ such that $\Theta \vdash S \leqt T$.
\end{proposition}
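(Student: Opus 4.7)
The plan is to prove the biconditional by two inductive arguments, following the template established for System \fsub in \cite{CGhe}. The ``if'' direction is the \emph{soundness} of the minimal typing algorithm: show by induction on the derivation of $\Theta \vdash_M t : S$ that $\Theta \vdash t : S$, so that an application of subsumption using $\Theta \vdash S \leqt T$ concludes $\Theta \vdash t : T$. Each algorithmic rule in Table~\ref{mintype} corresponds transparently to a rule of Table~\ref{tjd}, perhaps composed with a single subsumption step justified by Lemma~\ref{nonatom} (which gives $\Theta \vdash R \leqt \Theta^*(R)$ for any minimally-synthesized intermediate type $R$).

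The ``only if'' direction (\emph{completeness}) is proved by induction on the derivation of $\Theta \vdash t : T$. The variable, top and abstraction cases are direct from the inductive hypotheses. The subsumption case reduces immediately: if $\Theta \vdash t : T$ arose from $\Theta \vdash t : T'$ and $\Theta \vdash T' \leqt T$, then by induction $\Theta \vdash_M t : S$ with $\Theta \vdash S \leqt T'$, and transitivity (Lemma~\ref{trans}) yields $\Theta \vdash S \leqt T$. The substantive cases are application and type application, where the algorithmic rule requires $\Theta^*(S)$ of the minimally-inferred type $S$ of the head to have a specific shape (an arrow, or a quantified type with the appropriate decoration).

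The main obstacle is therefore a pair of \emph{shape inversion} lemmas for the algorithmic subtyping system of Table~\ref{suba}: (i) if $\Theta \vdash_A R \leqt S \rightarrow T$ then $\Theta^*(R) \equiv S' \rightarrow T'$ with $\Theta \vdash_A S \leqt S'$ and $\Theta \vdash_A T' \leqt T$; and (ii) if $\Theta \vdash_A R \leqt \forall^\top(X \leqt S).T$ then $\Theta^*(R)$ is of the form $\forall^\k(X \leqt R_0).R_1$ or $\forall^\top(X \leqt R_0).R_1$ with appropriate bound- and body-subtyping relations witnessed by $\mathsf{\forall-Loc}$ or $\mathsf{\forall-\top}$ respectively. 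Each is proved by induction on $R$: for non-atomic $R$, $\Theta^*(R) \equiv R$ and the algorithmic rules of Table~\ref{suba} are syntax-directed enough that only the listed forms can conclude the relevant judgment; for atomic $R \equiv X$, the unique applicable algorithmic rule unfolds $X$ to its bound, and the induction hypothesis combined with the recursive definition of $\Theta^*$ delivers the result. The subtlety here is specifically in case (ii): because $\mathsf{\forall-Loc}$ permits a $\forall^\k$-quantified left-hand side against a $\forall^\top$-quantified right-hand side, the minimally-typed head of a type application may itself be $\forall^\k$-quantified, and the algorithmic rule for type application in Table~\ref{mintype} must accommodate both sub-shapes. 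Once these inversion lemmas are in hand, the application and type-application cases close by routine substitution and by using transitivity to propagate the ``$\Theta \vdash S \leqt T$'' conclusion through the substituted codomains.
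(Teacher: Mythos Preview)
Your proposal is correct and follows essentially the same approach as the paper: soundness by observing that each $\vdash_M$ rule is derivable in \fsubu (using Lemma~\ref{nonatom} to justify the $\Theta^*$-based elimination rules), and completeness by induction on the typing derivation, with the elimination cases handled by inverting the shape of $\Theta^*(R)$ from the subtyping hypothesis. You make the shape-inversion step more explicit than the paper, which simply invokes Lemma~\ref{nonatom} and reads off the required form of $\Theta^*(R)$ directly from the syntax-directed algorithmic rules, but the underlying argument is the same.
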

\begin{proof}
From right to left, it suffices to check that $\Theta \vdash t:T$ implies $\Theta \vdash t:S$ by observing that each rule of $\vdash_M$ is derivable in \fsubu\!, so that $\Theta \vdash S \leqt T$ implies $\Theta \vdash t:T$.
 
The proof of the implication from left to right is by induction on the length of derivation of $\Theta  \vdash t:T$. We consider the cases where the last rule in this derivation is introduction of  $\forall^\k$ or elimination of  $\forall^\top$. 

Suppose the last rule applied is introduction of $\forall^\k$\!, so that $t \equiv \Lambda (X \leqt S).t'$ and $T \equiv \forall^\k (X \leqt S).T'$, where $\Theta, X \leqt S \vdash t':T'$. By hypothesis, $\Theta,X \leqt S \vdash_M t':R$ for some $R$ such that   $\Theta, X \leqt S \vdash R \leqt T'$. Then  $\Theta \vdash_M t:\forall^\k(X \leqt S).R$ and $\Theta \vdash \forall^\k (X \leqt S).R \leqt T$ as required.       

Suppose the last rule applied is the elimination of $\forall^\top$\!, so that $t \equiv t'\{S'\}$ and  $T \equiv T'[S'/X]$, where $\Theta \vdash t: \forall^\top X \leqt S.T'$  and $\Theta \vdash S' \leqt S$. By inductive hypothesis, $\Theta \vdash_M t':R$ for some $R$ such that $\Theta \vdash R \leqt \forall^\top (X\leqt S).T'$. Then by Lemma \ref{nonatom}, $\Theta \vdash \Theta^*(R) \leqt  \forall^\top (X\leqt S).T'$, and hence $\Theta^*(R) \equiv \forall^\k (X \leqt S'').T''$ or $\Theta^*(R) \equiv \forall^\top (X \leqt S'').T''$  for some $S'',T''$ such that $\Theta \vdash S \leqt S''$ and  $\Theta, X\leqt S'' \vdash T' \leqt T''$.
Then $\Theta \vdash S' \leqt S''$ and  $\Theta \vdash_M t\{S'\}:T''[S'/X]$, where $\Theta \vdash T''[S'/X] \leqt T= T'[S'/X]$ as required.         
\end{proof}
Hence we have a sound and complete typechecking procedure for System \fsubu\!, as for System \fsub --- accept the typing $\Theta \vdash t:T$ if the minimal typing algorithm produces a typing $\Theta \vdash_M t:S$ and the subtyping algorithm accepts $\Theta \vdash_A S \leqt T$.
\begin{table}
\begin{center}

 \AxiomC{}\UnaryInfC{\small$  \Theta,x:T,\Theta' \vdash_M x:T$}\DisplayProof\ \   \AxiomC{\small $\Theta,x:S \vdash_M t:T$}\UnaryInfC{\small $ \Theta \vdash_M \lambda (x:S).t:S \rightarrow T$} \DisplayProof\ \  \AxiomC{\small $\Theta \vdash_M r:R$}\AxiomC{\small  $\Theta \vdash_M s:S$} \AxiomC{$\small \Theta \vdash S \leqt S'$} \RightLabel{\small $\Theta^*(R) = S' \rightarrow T$} \TrinaryInfC{\small$\Theta \vdash_M r\spc s:T$} \DisplayProof \end{center}
\begin{center}
 \AxiomC{}\UnaryInfC{\small $ \Theta \vdash_M \topp:\top $} \DisplayProof\ \  \AxiomC{\small $\Theta, X \leqt S \vdash_M t:T$} \UnaryInfC{\small $\Theta \vdash_M \Lambda (X \leqt S).t:\forall^\k(X \leqt S). T$} \DisplayProof\ \ 
    \AxiomC{\small$\Theta \vdash_M r:R$}\AxiomC{\small$ \Theta \vdash S \leqt S'$} \RightLabel{\small $\Theta^*(R) = \forall^\k (X \leqt S').T$ or $\Theta^*(R) = \forall^\top (X \leqt S').T$ }\BinaryInfC{\small$ \Theta \vdash_M r\{S\}:T[S/X]$}     \DisplayProof   

\end{center}
\begin{center}
\caption{Derivation Rules for Minimal Typing Judgments}\label{mintype}
\end{center}
\end{table}

\section{Typing System \fsubtop terms in System \fsubu}
Given a type of System \fsub\!, how should we decorate its  bounded quantifiers?  The obvious answer is  to do so uniformly --- i.e. choose either $\forall^\k$ or $\forall^\top$ for  all of the quantifiers in the types of variables, giving two possible translations of undecorated types (and type-annotated terms) into System \fsubu\!\!. 

The first of these choices  leads back  to Kernel \fsub\!. Writing $\Theta \vdash^\k T$ and $\Theta \vdash^\k t$ for Kernel \fsub types and terms  (i.e. all of the quantifiers in $\Theta$ and $T$ and $t$ are instances of $\forall^\k$), and 
$\Theta \vdash^\k S \leqt T$ and $\Theta \vdash^\k t:T$ for Kernel \fsub subtyping and typing judgments  (i.e. those derivable in Kernel \fsub\!, if the decorating superscripts are erased)
 it is straightforward to show that: 
\begin{itemize}
\item If
 $\Theta \vdash S \leqt T$, where  $\Theta \vdash^\k S,T$,  then $\Theta \vdash^\k S \leqt T$.
\item If $\Theta \vdash_M  t:S$, where  $\Theta \vdash^\k t$,  then  $\Theta \vdash^\k t:S$. 
\end{itemize}
and hence that  for Kernel \fsub types and terms, $\Theta \vdash t:T$ if and only if  $\Theta \vdash^\k t:T$. In other words, System \fsubu is a conservative extension of Kernel \fsub\!.

What of System \fsubtop?  It is again straightforward to show that System \fsubu conservatively extends the \fsubtop subtyping relation: for \fsubtop-types, $\Theta \vdash S \leqt T$ implies $\Theta \vdash^\top S \leqt T$.\footnote{Writing $\Theta \vdash^\top T$ and $\Theta \vdash^\top t$ if $\Theta \vdash T$ and $\Theta \vdash t$ are \fsubtop types and terms, and $\Theta \vdash^\top S \leqt T$ and $\Theta \vdash^\top t:T$ if these are \fsubtop subtyping and typing judgments (derivable in System \fsubtop if decorating superscripts are erased).} However,  there are \fsubtop-terms  which may be typed with a \fsubtop-type in System \fsubu but are  not typable in System \fsubtop itself. (Compare the following with the example given in  Proposition \ref{ghelli}.) 
\begin{proposition}System \fsubu is not a  conservative extension of System \fsubtop\!. 
\end{proposition}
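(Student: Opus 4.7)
The plan is to exhibit a concrete \fsubtop-term $t$ in a context $\Theta$ together with a \fsubtop-type $T$ such that $\Theta \vdash t:T$ is derivable in System \fsubu but not in System \fsubtop. Following the hint to compare with Proposition~\ref{ghelli}, the construction is built around Ghelli's term $g := \Lambda(Z \leqt X).\lambda(y:Z).y$, exploiting that its \fsubu-minimal type $M := \forall^\k(Z \leqt X).(Z \rightarrow Z)$ sits, via $\mathsf{\forall-Loc}$, below both \fsubtop-types $A := \forall^\top(Z \leqt X).(Z \rightarrow Z)$ and $B := \forall^\top(Z \leqt X).(Z \rightarrow X)$, despite these having no common lower bound among \fsubtop-types.

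First, I would design $t$ so that a $\mathsf{\forall-Loc}$ step is used essentially: the idea is to embed $g$ (or a suitable variant) inside an outer abstraction or an instantiation of a doubly-nested $\Lambda$, producing a subterm whose \fsubu-minimal type is a genuine $\forall^\k$-type that propagates. The target $T$ is then chosen to be a \fsubtop-type reachable from this $\forall^\k$ minimal type via $\mathsf{\forall-Loc}$, whose body-subsumption crucially relies on the source bound $Z \leqt X$ (rather than the trivial $Z \leqt \top$ used by $\mathsf{\forall-Top}$).

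Next, for the positive direction ($\Theta \vdash t:T$ in \fsubu), I would run the minimal-typing algorithm of Section~\ref{minl} to synthesize $\Theta \vdash_M t:S$, obtaining an $S$ whose top-level or nested quantifier is $\forall^\k$, and then verify $\Theta \vdash_A S \leqt T$ using the algorithmic subtyping rules of Table~\ref{suba}, with a decisive application of $\mathsf{\forall-Loc}$ whose body-subtyping step uses $Z \leqt X$ by transitivity in a way that fails under $Z \leqt \top$.

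Finally, for the negative direction ($\Theta \not\vdash^\top t:T$ in \fsubtop), I would argue by exhaustive case analysis on candidate \fsubtop-derivations, in the style of Ghelli's impossibility argument. Since \fsubtop lacks the minimal typing property, the argument must enumerate the finite set of $\forall^\top$-types directly derivable for the critical subterm (essentially $A$, $B$, and their $\mathsf{\forall-Top}$-closures) and show that none of them can be subsumed to $T$ via \fsubtop subtyping, because $\mathsf{\forall-Top}$ uses the trivial body-context $X \leqt \top$ and therefore cannot recover the bound information that \fsubu retains through $\mathsf{\forall-Loc}$. The main obstacle is precisely this non-derivability argument: without a minimal type to latch onto, the case analysis needs care, and the algorithmic presentation of \fsubtop-subtyping (Table~\ref{suba} restricted to $\mathsf{\forall-Top}$) is the natural tool to keep it bounded and mechanical.
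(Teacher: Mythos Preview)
Your plan has the right overall shape --- a non-$\beta$-normal term built from a doubly-nested $\Lambda$ with an outer instantiation, typed in \fsubu via $\mathsf{\forall\text{-}Loc}$ and shown untypable in \fsubtop by a finite case analysis on the algorithmic rules --- and this is exactly the architecture of the paper's proof. However, there is a concrete gap in your construction. Ghelli's term $g = \Lambda(Z \leqt X).\lambda(y{:}Z).y$ is typable in \fsubtop with \emph{both} $A$ and $B$: that is precisely the content of Proposition~\ref{ghelli} (two incomparable types, hence no minimal one, but both are valid). So embedding $g$ in an outer $(\Lambda Y.\ldots)\{S\}$ does not by itself create a typability gap: whatever \fsubtop-type you reach from $g$'s \fsubu-minimal type via $\mathsf{\forall\text{-}Loc}$ using the bound $Z\leqt X$ in the body, \fsubtop can usually reach directly by using $Z\leqt X$ via subsumption \emph{inside the typing derivation} of the body (e.g.\ $y{:}Z$ subsumed to $y{:}X$ already gives $g:B$). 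Your criterion --- that the body subsumption must rely on $Z\leqt X$ rather than $Z\leqt\top$ --- is necessary but not sufficient; you also need the bound $Z\leqt X$ to be \emph{unavailable to the typing derivation} of the body, so that it can only be exploited through subtyping.

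The paper secures this with a small but decisive twist: replace $y{:}Z$ by $y{:}Y$, taking $u := (\Lambda Y.\Lambda(Z\leqt X).\lambda(y{:}Y).y)\{X\}$. The inner body $\lambda(y{:}Y).y$ has type $Y\to Y$, which is unrelated to $Z$, so the bound on $Z$ cannot be used while typing it. In \fsubu the minimal type of $u$ is $\forall^\k(Z\leqt X).X\to X$, and $\mathsf{\forall\text{-}Loc}$ (body step $X\to X\leqt Z\to X$ under $Z\leqt X$) yields the \fsubtop-type $\forall^\top(Z\leqt X).Z\to X$. In \fsubtop the types of $\Lambda Y.\Lambda(Z\leqt X).\lambda(y{:}Y).y$ are only $\forall^\top Y.\forall^\top(Z\leqt X).(Y\to Y)$, $\forall^\top Y.\forall^\top(Z\leqt X).(Y\to\top)$, $\forall^\top Y.\top$ and $\top$; instantiating $Y:=X$ gives at best $\forall^\top(Z\leqt X).(X\to X)$, and reaching $\forall^\top(Z\leqt X).(Z\to X)$ would require $X\to X\leqt Z\to X$ under $Z\leqt\top$, which fails. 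Your negative-direction outline is then exactly right once applied to this $u$ rather than to $g$.
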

\begin{proof}
Let $X \leqt \top \vdash u$ be the  \fsubtop-term $X \leqt \top \vdash^\top (\Lambda Y.\Lambda (Z \leqt X).\lambda (y:Y).y)\{X\}$. Then $X \leqt \top \vdash u:\forall^\top (Z \leqt X).Z \rightarrow X$, since we may derive the minimal type  $X \leqt \top \vdash_M u: \forall^\k (Z \leqt X).X \rightarrow X$ in \fsubu\!, and $X \leqt \top \vdash  \forall^\k (Z \leqt X).X \rightarrow X \leqt \forall^\top (Z \leqt X).Z \rightarrow X$ by $\mathsf{\forall-Loc}$,   
  so $X \leqt \top \vdash u:\forall^\top (Z \leqt X).Z \rightarrow X$ by subsumption. 

However, this typing is not valid in System \fsubtop\!. Suppose $X \leqt \top  \vdash^\top u:\forall^\top (Z \leqt X).Z \rightarrow X$. Then $X \leqt \top \vdash^\top \Lambda Y.\Lambda (Z \leqt X) \lambda(y:Y).y : \forall^\top Y.T$ for some \fsubtop-type $X \leqt \top,Y\leqt \top \vdash T$ such that $X \leqt \top \vdash T[X/Y]  \leqt \forall^\top (Z \leqt X).Z \rightarrow X$ --- i.e. $T[X/Y] \equiv \forall^\top(Z \leqt T_0).T_1 \rightarrow T_2$, where in particular $X \leqt \top, Y \leqt \top, Z \leqt \top \vdash Z \leqt T_1$, and so $T_1 \equiv Z$ or $T_1 \equiv \top$. But the only types for $\Lambda Y.\Lambda (Z \leqt X) \lambda(y:Y).y$ in System \fsubtop are $X \leqt \top \vdash \forall^\top Y.\forall^\top(Z \leqt X). Y \rightarrow Y$, $X \leqt \top \vdash \forall^\top Y.\forall^\top(Z \leqt X). Y \rightarrow \top$, $X \leqt \top \vdash \forall^\top Y.\top$ and $X \leqt \top \vdash \top$.\footnote{Since this term is $\beta$-normal,  by Proposition \ref{conservative} below we may use the type-synthesis algorithm for System \fsubu to derive these types.}   
\end{proof}
Note that  the $\beta$-normal form of $u$ ---  $X \leqt \top \vdash^\top \Lambda (Z \leqt X).\lambda (y:X).y$ --- \emph{is} typable in System \fsubtop with $X \leqt \top \vdash^\top \forall^\top (Z \leqt X).Z \rightarrow X$. In fact this is true in general: System \fsubu typing is conservative over System \fsubtop when restricted to $\beta$-normal forms, as we now show.
\begin{lemma}If  $\Theta \vdash_M t:T$, where $\Theta \vdash^\top t$ is  $\beta$-normal and not a ($\lambda$ or $\Lambda$)  abstraction, then  $\Theta \vdash^\top T$.\label{abs}
\end{lemma}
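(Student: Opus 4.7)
The plan is to proceed by induction on the structure of the $\beta$-normal term $t$ (equivalently, on the derivation of $\Theta \vdash_M t:T$, which is syntax-directed). Since $t$ is $\beta$-normal and not an abstraction, it must be a neutral term: either $\topp$, a variable $x$, an application $r\, s$, or a type application $r\{S\}$. In the latter two cases, the head subterm $r$ is itself $\beta$-normal (subterms of $\beta$-normal terms are) and not an abstraction (otherwise $r\,s$ or $r\{S\}$ would be a $\beta$-redex), so the induction hypothesis applies to it.

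Before entering the case analysis, the key auxiliary observation to record is that the $\Theta^*$ operation of Lemma~\ref{nonatom} preserves the \fsubtop-property: since $\Theta \vdash^\top t$ implies $\Theta$ is an \fsubtop-context (all its bounds are \fsubtop-types), and $\Theta^*$ merely chases type-variables through the bounds in $\Theta$, applying $\Theta^*$ to an \fsubtop-type yields an \fsubtop-type. This is an easy induction on $|\Theta|$.

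The cases $t \equiv \topp$ and $t \equiv x$ are immediate: $T \equiv \top$, respectively $T$ is the type assigned to $x$ in $\Theta$, both of which are \fsubtop-types. For $t \equiv r\,s$, the rule for $\vdash_M$ gives $\Theta \vdash_M r:R$ with $\Theta^*(R) \equiv S' \rightarrow T$; by induction $R$ is an \fsubtop-type, hence so is $\Theta^*(R)$ by the observation above, and thus $T$ is an \fsubtop-type. For $t \equiv r\{S\}$, induction gives that $r$'s minimal type $R$ is an \fsubtop-type, so $\Theta^*(R)$ is an \fsubtop-type whose outermost connective is a quantifier; this forces $\Theta^*(R) \equiv \forall^\top (X \leqt S').T'$ rather than $\forall^\k (X \leqt S').T'$, ruling out the $\forall^\k$ branch of the rule. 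Since $S$ is an \fsubtop-type (it is an annotation appearing in the \fsubtop-term $t$) and $T'$ is an \fsubtop-type, the resulting type $T \equiv T'[S/X]$ is an \fsubtop-type, as required.

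The main point that needs care is precisely the $\forall^\k$/$\forall^\top$ disambiguation in the type-application case: it is here that the \fsubtop-ness of $R$ (obtained from the inductive hypothesis) together with the preservation property of $\Theta^*$ combine to exclude a $\forall^\k$ head, which is what could in principle introduce a non-\fsubtop connective into the synthesised type. The remaining work is routine bookkeeping.
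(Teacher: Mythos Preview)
Your proof is correct and follows essentially the same inductive case analysis as the paper. You are in fact slightly more careful than the paper in two places: you spell out that $\Theta^*$ preserves \fsubtop-types (the paper's proof silently conflates the minimal type $R$ of the head subterm with $\Theta^*(R)$ in the application and type-application cases), and you explicitly argue why the $\forall^\k$ branch of the type-application rule cannot arise (the paper simply writes $\forall^\top$ without comment). These are welcome clarifications rather than a different route.
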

\begin{proof}By induction on the length of $t$.
\begin{itemize}
\item If $t \equiv \topp$, then $T \equiv \top$, which is a \fsubtop-type.
\item If $t \equiv x$ for some variable $x$, then $\Theta \equiv \Theta',x:T,\Theta''$ and so $T$ is a \fsubtop type by assumption.  
\item If $t \equiv t'\spc t''$ then $\Theta \vdash_M t':S \rightarrow T$ for some $S$.  Since $t$ is $\beta$-normal, $t'$ is not an abstraction. By hypothesis, $S \rightarrow T$ (and hence also $T$) is a \fsubtop type. 
\item If $t \equiv t'\{S\}$ then  $\Theta \vdash_M t': \forall^\top(X \leqt S').T'$, where $T'$ is an \fsubtop type (since $t$ is $\beta$-normal and not an abstraction). So $T \equiv T'[S/X]$ is a \fsubtop-type.   
\end{itemize}
\end{proof}
\begin{proposition}\label{conservative}
If $\Theta \vdash t:T$, where  $\Theta \vdash^\top t$ is $\beta$-normal and $\Theta \vdash^\top T$  then $\Theta \vdash^\top t:T$.     
\end{proposition}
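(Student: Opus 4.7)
The plan is to proceed by induction on the structure of the $\beta$-normal term $t$, using Proposition \ref{minty} to extract a minimal-typing derivation $\Theta \vdash_M t:S$ with $\Theta \vdash S \leqt T$ for the given \fsubtop-type $T$. Since $\Theta$ is an \fsubtop-context (by the assumption $\Theta \vdash^\top t$), each bound in $\Theta$ is an \fsubtop-type. The induction hypothesis will be exactly the statement of the proposition; the task in each case is to derive a \fsubtop typing from the existence of a \fsubu typing, using the minimal-typing algorithm of Table \ref{mintype} to analyse the structure of $S$, and the algorithmic subtyping rules of Table \ref{suba} to analyse the shape of $T$. Conservativity of the subtyping relation on \fsubtop-types (discussed just before Proposition \ref{conservative}) will be applied silently whenever we need to transfer an \fsubu subtyping judgement between \fsubtop-types into a \fsubtop subtyping judgement, in order to invoke the $\mathsf{sub}$ rule of System \fsubtop.

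In the non-abstraction cases ($t \equiv \topp$, $t \equiv x$, $t \equiv t_1\spc t_2$, $t \equiv t_1\{S'\}$), Lemma \ref{abs} immediately yields that $S$ is an \fsubtop-type, and its shape is constrained by the minimal-typing rules. For applications and type-applications we recurse on $t_1$ (which is also $\beta$-normal and non-abstract, so Lemma \ref{abs} pins down its minimal type $R$ as an \fsubtop-type, and hence $\Theta^*(R)$ is \fsubtop), feeding the IH the \fsubtop target type $\Theta^*(R)$; the $\forall$-elimination case requires in addition that $\Theta^*(R)$ has shape $\forall^\top(X \leqt S'').T''$ rather than $\forall^\k(X \leqt S'').T''$, which is forced by $\Theta^*(R)$ being \fsubtop. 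For the two abstraction cases, the minimal type is of the form $R \to T'$ or $\forall^\k(X \leqt R).T'$, and we examine the only possible algorithmic subtyping derivations of $S \leqt T$: in the $\lambda$ case, $T$ is either $\top$ or $R' \to T_1$; in the $\Lambda$ case, $T$ is either $\top$ or $\forall^\top(X \leqt R').T_1$. In each non-$\top$ subcase we apply the IH to the body (with the appropriate \fsubtop target $T_1$, using that $\Theta,X\leqt R$ is an \fsubtop-context because $R$ is \fsubtop) and then reconstruct the \fsubtop derivation using the introduction rule followed by an application of the \fsubtop subtyping rules $\rightarrow$ or $\mathsf{\forall\text{-}Top}$. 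The $T \equiv \top$ subcase is handled by first typing the body with $\top$ (a single application of the IH) and then using subsumption through the $\mathsf{Top}$ rule.

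The main obstacle I anticipate is the $\Lambda$-abstraction case, since this is precisely where System \fsubu parts company from System \fsubtop in the non-$\beta$-normal setting (see the preceding counterexample). The point that makes it go through for $\beta$-normal $t$ is that, when $\Theta \vdash_A \forall^\k(X \leqt R).T' \leqt \forall^\top(X \leqt R').T_1$ is derived by $\mathsf{\forall\text{-}Loc}$, the premise $\Theta, X\leqt R \vdash_A T' \leqt T_1$ is exactly what is needed to feed the IH on the body $t'$ and obtain $\Theta,X\leqt R \vdash^\top t':T_1$; the resulting \fsubtop judgement $\Theta \vdash^\top \Lambda(X\leqt R).t' : \forall^\top(X\leqt R).T_1$ then subtypes into the target $\forall^\top(X \leqt R').T_1$ via $\mathsf{\forall\text{-}Top}$, using the other premise $\Theta \vdash R' \leqt R$ and reflexivity on $T_1$. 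No appeal to a "narrowing" lemma at type $X \leqt \top$ is required, which avoids precisely the transitivity failure that obstructs the undecorated $\mathsf{\forall\text{-}Loc}$ rule discussed in Section \ref{subt}.
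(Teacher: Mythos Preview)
Your proposal is correct and follows essentially the same approach as the paper's proof: induction on the structure of the $\beta$-normal term, using the minimal typing of Proposition~\ref{minty} together with Lemma~\ref{abs} for the non-abstraction cases, and an analysis of the algorithmic subtyping derivation (via $\mathsf{\forall\text{-}Loc}$) for the $\Lambda$-case. One small omission: in the application case $t \equiv t_1\,t_2$ you only mention recursing on $t_1$, but you also need to apply the induction hypothesis to $t_2$ with the \fsubtop target given by the domain of $\Theta^*(R)$; this is straightforward since that domain is an \fsubtop-type.
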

\begin{proof}
By induction on the length of $t$. If  $T = \top$ then evidently    $\Theta\vdash^\top t:T$. Otherwise, suppose $\Theta \vdash_M t:S$ where  $\Theta \vdash S \leqt T$ and $T \not \equiv \top$:
\begin{itemize}
\item If $t \equiv x$ then $\Theta \equiv \Theta',x:S,\Theta''$ and so  $\Theta \vdash^\top t:S$ and  $\Theta \vdash^\top t:T$ by conservativity of subtyping. 
\item If $t \equiv t' \spc t''$ then $\Theta \vdash_M t':S'$, where $\Theta^*(S') = R \rightarrow S$ and $\Theta \vdash_M t'':R$ for some types $R,S'$. Then 
$t'$ is not an abstraction (as $t$ is $\beta$-normal) and so by Lemma \ref{abs},  $S'$ is a \fsubtop type, and hence so are $R \rightarrow S$ and $R$. By hypothesis  $\Theta \vdash^\top t':R \rightarrow S$. and 
$\Theta \vdash^\top t'':R$, so $\Theta \vdash^\top t:T$ as required. 
\item If $t \equiv \lambda (x:R).t'$ then  $S \equiv R \rightarrow S'$, where  
$\Theta,x:R \vdash_M t':S'$, and $T \equiv R' \rightarrow T'$, where $\Theta \vdash R \leqt R'$ and $\Theta \vdash S' \leqt T'$. By hypothesis $\Theta,x:R \vdash^\top t':S'$ and hence $\Theta, \vdash^\top t:T$ as required.

\item If $t \equiv t'\{R\}$ then since $t'$ is not an abstraction its minimal type is a \fsubtop type  by Lemma \ref{abs}. So $\Theta \vdash_M t':S'$, where $\Theta^*(S') = \forall^\top(X \leqt R').S''$ and $S \equiv S''[R'/X]$. By  hypothesis (since $S'$ is a \fsubtop type), $\Theta \vdash^\top t':S'$, and so $\Theta \vdash^\top t:T$ as required. 

\item If $t \equiv \Lambda (X \leqt R).t'$ then  $S \equiv \forall^\k (X \leqt R). S'$ and $T \equiv \forall^\top (X \leqt R'). T'$, for some $R,R',S',T'$  such that $\Theta, X \leqt S' \vdash_M t':R'$, 
$\Theta \vdash R\leqt R'$ and $\Theta,X \leqt R' \vdash S'\leqt T'$. Then by
 hypothesis $\Theta,X \leqt R' \vdash^\top t':T'$ and hence $\Theta \vdash^\top t:T$ as required.

\end{itemize}
\end{proof}
In semantic terms, System \fsubu is thus a conservative extension of System \fsubtop\!. However,  by supplying the missing minimal types it satisfies more cases of \emph{subject expansion}.

\section{Decidability of typechecking terms of \fsubtop in \fsubu}
 It is straightforward to show that (as in System \fsub\!) the typechecking algorithm for System \fsubu terminates on a given input if and only if every call made to the subtyping algorithm terminates. So  decidability of typechecking boils down to termination of these calls. We do not know whether the subtyping algorithm determined by the rules in Table \ref{suba} terminates in general, nor whether  a terminating algorithm exists. The culprit is the rule $\mathsf{\forall-Loc}$ used to infer the subtyping relation between types quantified by $\forall^\k$ and $\forall^\top$\!, which introduces a convoluted form of the rebounding problem encountered in System \fsub itself: $\mathsf{\forall-Loc}$ does not reduce the simple metrics on subtyping judgments used to prove termination for System \fsubtop or Kernel \fsub\!, but the arguments used to show undecidability of subtyping in System \fsubu do not apply \cite{fsubtop} either.

However, for uniformly decorated types this problem does not arise.  Decidability of  typechecking for Kernel \fsub terms follows by conservativity; here we show that typechecking of  \fsubtop-terms with \fsubtop-types is also decidable, by showing that the proof of decidability of subtyping for System \fsubtop shown in \cite{fsubtop} extends to the minimal types inferred for \fsubtop-terms  in System \fsubu\!.

\begin{definition}The \emph{minimal types for \fsubtop} are the \fsubu types  given by the grammar:
$$T::= S \ |\ \forall^\k(X \leqt S).T\ | \ S \rightarrow T$$
where $S$ ranges over the \fsubtop-types.
 \end{definition}
The following lemma justifies the terminology.
\begin{lemma}\label{fmtype}For any \fsubtop-term  $\Theta \vdash^\top t$, if  $\Theta \vdash_M t:T$ then $T$ is  a minimal type for \fsubtop\!.
\end{lemma}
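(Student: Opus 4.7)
The plan is to induct on the derivation of $\Theta \vdash_M t:T$ (equivalently, the structure of $t$) using the rules of Table \ref{mintype}. Before running the induction, I would record two short auxiliary facts that do most of the work.

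\textbf{Fact 1 (preservation under $\Theta^*$).} If $\Theta \vdash^\top R$, then $\Theta^*(R)$ is again an \fsubtop\!-type; in particular, if $\Theta^*(R)$ is a quantified non-atomic type it must be of the form $\forall^\top (X \leqt S).T$ and never $\forall^\k (X \leqt S).T$. This is immediate from the definition in Lemma \ref{nonatom}, since $\Theta^*$ merely recursively unfolds type variables through their bounds and every bound in an \fsubtop\!-context is itself an \fsubtop\!-type.

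\textbf{Fact 2 (substitution).} If $T$ is a minimal type for \fsubtop\! and $S$ is an \fsubtop\!-type, then $T[S/X]$ is a minimal type for \fsubtop\!. A direct structural induction on the grammar of minimal types suffices, using closure of \fsubtop\!-types under \fsubtop\!-substitution for the base clause and pushing the substitution under $\forall^\k$ and $\rightarrow$ for the other two clauses.

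The main induction then splits over the six rules of Table \ref{mintype}. The $\topp$ and $x$ cases are immediate (the inferred type is an \fsubtop\!-type). The $\lambda$- and $\Lambda$-introduction cases match the third and second clauses of the grammar respectively, using the inductive hypothesis on the body and the fact that the bound/annotating type is \fsubtop\! because $t$ is. The application case $r\,s$ uses Fact 1: if $R$, the minimal type of $r$, already has the form $S' \rightarrow T$ in the grammar then $\Theta^*(R)=R$ and $T$ is minimal for \fsubtop\! by IH; otherwise $R$ is an \fsubtop\!-type and Fact 1 gives $\Theta^*(R)$ \fsubtop\!, so its codomain $T$ is an \fsubtop\!-type.

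The instantiation case $r\{S\}$ is the one where both facts combine, and is where the main care is required: when $\Theta^*(R) = \forall^\top(X \leqt S').T'$, Fact 1 together with the grammar forces this to have arisen from $R$ being in the \fsubtop\!-clause, so $T'$ and hence $T'[S/X]$ is \fsubtop\!; when $\Theta^*(R) = \forall^\k(X \leqt S').T'$, then $R$ itself must have had this form in the minimal-type grammar and $T'$ is minimal for \fsubtop\!, so Fact 2 (applied with $S$ \fsubtop\!, which holds because $\Theta \vdash^\top t$) yields $T'[S/X]$ minimal for \fsubtop\!. The mild obstacle to watch here, and the whole reason for including the $\forall^\k$ clause in the grammar of ``minimal types for \fsubtop\!'', is that the $\Lambda$-introduction rule of $\vdash_M$ always stamps a $\forall^\k$ on top of the body's minimal type, even on a genuinely \fsubtop\!-term --- this is precisely the mismatch underlying Ghelli's counterexample (Proposition \ref{ghelli}), and Fact 1 is what guarantees no stray $\forall^\k$ leaks in through the $\Theta^*$ unfolding in the instantiation case.
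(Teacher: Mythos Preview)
Your proposal is correct and follows essentially the same approach as the paper: both argue by induction on the structure of $t$ using the rules of Table~\ref{mintype}. Your version is in fact more careful than the paper's, which in the instantiation case writes $\Theta \vdash_M t':\forall^\k(X \leqt S').T'$ without explicitly treating the $\Theta^*$ unfolding or the $\forall^\top$ possibility; your Facts~1 and~2 make explicit exactly what the paper glosses over with ``it is straightforward to check''.
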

\begin{proof}By induction on the length of $t$:
\begin{itemize}
\item If $t$ is a variable then its minimal type is that assigned to it in $\Theta$, which is a  \fsubtop-type.
\item If $t \equiv \Lambda (x:S).t'$ then $\Theta, X:S \vdash_M t':T'$, where $\Theta \vdash S$ is a \fsubtop-type and $\Theta \vdash T'$ is a minimal type for \fsubtop  by hypothesis, and so $T \equiv \forall^\k (X \leqt S).T'$ is a  minimal type for \fsubtop\!.   
\item If $t \equiv t'\{S\}$ then $\Theta \vdash_M t':\forall^\k(X \leqt S').T'$. By hypothesis  $\forall^\k(X \leqt S').T'$ is a minimal type for \fsubtop and hence so is $T'$. $S$ is a \fsubtop-type, and it is straightforward to check that $T \equiv T'[S/X]$ is therefore a minimal type for \fsubtop\!.
  \item The cases $t \equiv \top$, $t \equiv \lambda (x:S).t'$ and  $t \equiv t' \spc t''$  are  similar.
\end{itemize}
\end{proof}
\begin{proposition}\label{tsub}If $\Theta \vdash S$ is a minimal type for \fsubtop\!, and $\Theta \vdash^\top T$ is a \fsubtop-type, then the subtyping algorithm terminates on $\Theta \vdash_A S \leqt T$.   
\end{proposition}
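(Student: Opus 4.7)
The plan is to argue by strong induction on the size of $S$, exploiting the restricted shape of minimal types for \fsubtop. By the grammar $T::= S \ |\ \forall^\k(X \leqt S).T\ | \ S \rightarrow T$, any $\forall^\k$-quantifier occurring in a minimal type has a \fsubtop-typed bound and any arrow has a \fsubtop-typed domain --- so $\forall^\k$ appears only in covariant spine positions above a \fsubtop core. Throughout the recursion initiated by the algorithm, I will maintain two invariants: (i) every context $\Theta'$ that arises binds only \fsubtop-typed variables, and (ii) every subtyping goal $S' \leqt T'$ generated has $T'$ a \fsubtop-type, and $S'$ either a \fsubtop-type or a strictly smaller minimal type for \fsubtop.

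The base case is when $S$ is itself a \fsubtop-type. Then both sides of the goal are \fsubtop-types in a \fsubtop-context, and inspection of Table~\ref{suba} shows that the only rules that can fire --- the $\top$ and reflexivity rules, variable lookup (where the bound retrieved is again \fsubtop by (i)), arrow decomposition, and $\mathsf{\forall-\top}$ --- coincide exactly with the algorithmic subtyping rules of \fsubtop. All resulting subcalls preserve both invariants, so termination here reduces to the decidability theorem for \fsubtop established in \cite{fsubtop}.

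For the inductive step $S$ is either $S_1 \rightarrow S_2$ or $\forall^\k (X \leqt S_0).S_1$, with $S_0$ a \fsubtop-type and $S_1$ (respectively $S_2$) a strictly smaller minimal type. If $T = \top$, the algorithm halts in one step. In the arrow case, since $T$ is a non-$\top$ \fsubtop-type it must be of the form $T_1 \rightarrow T_2$ with both components \fsubtop, producing subcalls $T_1 \leqt S_1$ (purely \fsubtop, terminating by the base case) and $S_2 \leqt T_2$ (dispatched by the induction hypothesis). In the $\forall^\k$-case, $T$ cannot be $\forall^\k$-decorated since $T$ is \fsubtop, so the only applicable rule is $\mathsf{\forall-Loc}$ with $T = \forall^\top(X \leqt T_0).T_1$, yielding $T_0 \leqt S_0$ (pure \fsubtop, terminating) and $S_1 \leqt T_1$ in the context extended by $X \leqt S_0$. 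Since $S_0$ is \fsubtop, invariant (i) survives; and since $S_1$ is a strictly smaller minimal type while $T_1$ is \fsubtop, the induction hypothesis applies.

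The main obstacle here is a bookkeeping one: one must verify carefully that no rule of Table~\ref{suba} ever extends the context with a non-\fsubtop bound, and that no algorithmic step ever produces a right-hand side escaping the \fsubtop-type grammar --- for otherwise a subsequent variable lookup could inject a minimal-type bound into the recursion and break the induction. Both facts are guaranteed by the syntactic restriction in the minimal-type grammar forcing $\forall^\k$-bounds and arrow-domains to be \fsubtop. With these invariants secured and the termination result for \fsubtop of \cite{fsubtop} in hand, the induction closes; combined with Lemma~\ref{fmtype} and Proposition~\ref{minty} this yields decidability of typechecking for \fsubtop-terms in \fsubu\!.
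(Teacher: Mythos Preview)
Your proposal is correct and follows essentially the same approach as the paper: induction on the size of $S$, with the base case (when $S$ is already a \fsubtop-type) delegated to the Castagna--Pierce termination result, and the two inductive cases $\forall^\k(X\leqt S_0).S_1$ and $S_0\rightarrow S_1$ each decomposed into one pure-\fsubtop subgoal plus one strictly smaller recursive subgoal. Your explicit tracking of the context invariant (that extended contexts remain \fsubtop because $\forall^\k$-bounds in minimal types are \fsubtop) is a useful clarification that the paper leaves implicit when it invokes the induction hypothesis on $\Theta, X\leqt S_0 \vdash_A S_1 \leqt T_1$.
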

\begin{proof}By induction on the size of $S$. If it is a \fsubtop-type then the subtyping algorithm terminates by the proof of Castagna and Pierce \cite{fsubtop}. The remaining cases are:
\begin{itemize}
\item $S \equiv \forall^\k (X \leqt S_0).S_1$. If $T \equiv \forall^\top (X \leqt T_0).T_1$ then by induction hypothesis the algorithm terminates on $\Theta \vdash_A T_0 \leqt S_0$ and $\Theta,X\leqt S_0 \vdash_A S_1 \leqt T_1$ and hence terminates on $\Theta \vdash_A S \leqt T$. Otherwise, either $T \equiv \top$, and so $\Theta \vdash_A S \leqt T$ is accepted, or $T \not \equiv \top$ and it is rejected immediately.
\item $S \equiv S_0 \rightarrow S_1$ (similar).       
\end{itemize}
\end{proof}

\begin{proposition}\label{mtt}The minimal typing  algorithm terminates on any  \fsubtop-term  $\Theta \vdash^\top t$.
\end{proposition}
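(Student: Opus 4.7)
The plan is to reduce termination of the minimal typing algorithm to termination of the finitely many subtyping calls it invokes, and then to show that those calls fall within the scope of Proposition \ref{tsub}.

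First, I would proceed by structural induction on $t$: every rule in Table \ref{mintype} has premises that mention $\vdash_M$ only on proper subterms of the conclusion, so the minimal typing algorithm terminates on $\Theta \vdash^\top t$ provided each of the finitely many calls it makes to $\vdash_A$ terminates. Inspecting Table \ref{mintype}, subtyping is invoked only in the application rule (a check $\Theta \vdash S \leqt S'$ with $\Theta^*(R) = S' \rightarrow T$) and in the instantiation rule (a check $\Theta \vdash S \leqt S'$ with $\Theta^*(R) = \forall^\k(X \leqt S').T$ or $\forall^\top(X \leqt S').T$). Throughout the recursion the context stays \fsubtop, since the only extension comes from the $\Lambda$-rule with an annotation $S$ drawn from the \fsubtop-term, hence an \fsubtop-type.

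Next, I would invoke Lemma \ref{fmtype} to conclude that the minimal type $R$ synthesized for any subterm of an \fsubtop-term is a minimal type for \fsubtop. A small auxiliary observation is then required: if $R$ is a minimal type for \fsubtop and $\Theta^*(R)$ has the head shape $S' \rightarrow T$, $\forall^\k(X \leqt S').T$, or $\forall^\top(X \leqt S').T$, then $S'$ is a \fsubtop-type. This is proved by cases on whether $R$ is atomic: if $R$ is non-atomic then $\Theta^*(R) = R$ and the grammar $T ::= S \mid \forall^\k (X\leqt S).T \mid S \rightarrow T$ of minimal types for \fsubtop forces $S'$ to be \fsubtop by inspection (noting that any $\forall^\top$-prefixed minimal type is outright a \fsubtop-type); if $R$ is a type variable, then $\Theta^*$ unfolds its bound in $\Theta$, and by the preceding paragraph every bound in the running context is an \fsubtop-type, so iterating this unfolding yields an \fsubtop-type.

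Combining these observations finishes the proof. In the application case, the left-hand side $S$ (the minimal type of the \fsubtop-subterm $s$) is a minimal type for \fsubtop by Lemma \ref{fmtype}, while $S'$ is a \fsubtop-type by the auxiliary observation, so Proposition \ref{tsub} applies. In the instantiation case, the type argument $S$ is already a \fsubtop-type because $t$ is an \fsubtop-term, and $S'$ is again a \fsubtop-type, so Proposition \ref{tsub} applies. The main obstacle is the auxiliary observation about $\Theta^*$: one must verify that following context bounds through a variable cannot manufacture a non-\fsubtop bound at the position fed into the subtyping call. Once this is in hand, the rest is routine structural bookkeeping against Table \ref{mintype}.
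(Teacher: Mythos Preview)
Your proposal is correct and follows essentially the same approach as the paper: structural induction on $t$, invoking Lemma \ref{fmtype} to control the shape of synthesized types and Proposition \ref{tsub} to discharge the subtyping calls. You are simply more explicit than the paper on two points it handles in passing---that the running context stays an \fsubtop-context under the $\lambda$ and $\Lambda$ extensions, and that $\Theta^*(R)$ of a minimal type for \fsubtop has an \fsubtop-type in the bound/domain position (the paper just asserts ``and hence so is $\Theta^*(T')$'').
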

\begin{proof}By induction on the size of $t$,  verifying that the minimal typing algorithm calls the subtyping algorithm only on terminating inputs. 

Suppose, for example, that $t \equiv t'\{S\}$.
  By induction, the algorithm either rejects $\Theta \vdash t'$  or finds a minimal typing $\Theta \vdash_M t:T'$, where
 $ \Theta \vdash T'$ is a  \fsubtop-minimal type by Lemma \ref{fmtype}, and hence so is  $\Theta^*(T')$. If $\Theta^*(T') \equiv \forall^\k (X \leqt T_0).T_1$ or $\Theta^*(T') \equiv \forall^\top (X \leqt T_0).T_1$   then  $T_0$ is a \fsubtop-type, and the subtyping algorithm either accepts or rejects $\Theta \vdash_A S \leqt T_0$ by Lemma \ref{tsub}: in the former case  the minimal typing algorithm  returns $T_1[S/X]$ as the minimal type of $t$, otherwise (or if  $\Theta^*(T')$ is not a bounded quantification) it rejects. 
\end{proof}

\begin{proposition}Typechecking of \fsubtop-terms in System \fsubu is decidable. 
\end{proposition}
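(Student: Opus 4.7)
The proof should be essentially a direct composition of the termination results already established. The plan is to exhibit the standard two-phase typechecking procedure --- first synthesize a minimal type, then test it against the candidate type via the algorithmic subtyping relation --- and observe that on \fsubtop-inputs each phase terminates.

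Concretely, to decide whether $\Theta \vdash^\top t : T$ (with $\Theta$, $t$ and $T$ all \fsubtop-inputs), first run the minimal typing algorithm of Table \ref{mintype} on $\Theta \vdash t$. By Proposition \ref{mtt} this halts, either rejecting (in which case, by Proposition \ref{minty}, no typing of $t$ in \fsubu, and hence none in \fsubtop, is derivable and we reject) or returning some minimal type $\Theta \vdash_M t : S$. By Lemma \ref{fmtype}, the returned $S$ is a minimal type for \fsubtop in the sense of the preceding definition. Next, run the algorithmic subtyping procedure of Table \ref{suba} on $\Theta \vdash_A S \leqt T$; since $S$ is an \fsubtop-minimal type and $T$ is an \fsubtop-type, Proposition \ref{tsub} guarantees termination. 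Accept exactly when this check succeeds.

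Soundness and completeness of this procedure are then immediate from Proposition \ref{minty}: $\Theta \vdash t : T$ holds in \fsubu precisely when the synthesized minimal $S$ satisfies $\Theta \vdash S \leqt T$, which (by the equivalence of $\vdash$ and $\vdash_A$ on the subtyping relation) is exactly what the algorithmic test decides. Combining this with the conservativity of \fsubu over \fsubtop typing on $\beta$-normal forms (Proposition \ref{conservative}) --- or, more directly, simply noting that the question ``is the \fsubu-judgment $\Theta \vdash t : T$ derivable?'' is the decidable problem in question --- completes the argument.

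There is really no hard step left: all the work has been concentrated in Propositions \ref{mtt} and \ref{tsub}, whose proofs already isolated the delicate rebounding issue and showed that it does not arise when all quantifiers in the synthesized and target types come from the \fsubtop fragment (possibly wrapped in $\forall^\k$-prefixes arising from $\Lambda$-abstraction, which are handled by the $\mathsf{\forall\text{-}Loc}$ case of Proposition \ref{tsub}). The only thing to emphasize in writing it up is that the test $\Theta \vdash_A S \leqt T$ is well within the scope of Proposition \ref{tsub}, i.e.\ that the minimal type produced is always of the restricted grammatical form required there — which is exactly the content of Lemma \ref{fmtype}.
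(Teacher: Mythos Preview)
Your proposal is correct and follows essentially the same approach as the paper: run the minimal-typing algorithm (terminating by Proposition~\ref{mtt}), then check the resulting minimal type against $T$ via algorithmic subtyping (terminating by Proposition~\ref{tsub}, using Lemma~\ref{fmtype} to ensure the hypothesis applies), with correctness supplied by Proposition~\ref{minty}. The only minor remark is that your mention of Proposition~\ref{conservative} is unnecessary here --- as you yourself note, the proposition concerns decidability of the \fsubu-judgment, and conservativity is only needed for the subsequent corollary about \fsubtop itself.
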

\begin{proof}
For any \fsubtop-term $\Theta \vdash^\top t$ and \fsubtop-type $\Theta \vdash^\top T$,  by Proposition \ref{mtt} the minimal typing algorithm either rejects $\Theta \vdash t$ or  produces a \fsubtop-minimal type $\Theta \vdash_M:S$, in which case the the subtyping algorithm  either rejects  $\Theta \vdash_A S \leqt T$ or accepts  it --- and thus the typing $\Theta \vdash t:T$ --- by Lemma \ref{tsub}.
\end{proof}
By Proposition \ref{conservative} this extends to typechecking of $\beta$-normal terms in \fsubtop itself.
\begin{corollary}Typechecking of $\beta$-normal terms in System \fsubtop is decidable. 
\end{corollary}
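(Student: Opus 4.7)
The plan is to reduce \fsubtop typechecking of $\beta$-normal terms directly to \fsubu typechecking, invoking the two results that precede the corollary: the decidability of typechecking \fsubtop-terms in System \fsubu (the immediately preceding proposition), and the conservativity result for $\beta$-normal forms (Proposition \ref{conservative}).

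Concretely, given a \fsubtop-term $\Theta \vdash^\top t$ in $\beta$-normal form and a \fsubtop-type $\Theta \vdash^\top T$, the procedure is: first run the \fsubu typechecking algorithm (minimal type synthesis followed by subtyping) on the judgment $\Theta \vdash t:T$. By the preceding proposition this algorithm terminates and is sound and complete for \fsubu typing. So we either obtain a decisive ``accept'' or ``reject'' for $\Theta \vdash t:T$ in \fsubu.

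It then remains to argue that this decision coincides with \fsubtop derivability. If the algorithm accepts, so that $\Theta \vdash t:T$ in \fsubu, then Proposition \ref{conservative} gives $\Theta \vdash^\top t:T$ directly, using both the hypothesis that $t$ is $\beta$-normal and that $T$ is a \fsubtop-type. For the converse, if $\Theta \vdash^\top t:T$ holds, then every \fsubtop derivation can be transported into \fsubu by erasing/decorating quantifiers appropriately (this is the easy conservativity direction, already observed above the proposition where the author notes that \fsubu extends the \fsubtop subtyping relation, and the typing rules of \fsubu subsume those of \fsubtop via subsumption through $\mathsf{\forall\text{-}Loc}$), so the algorithm must accept. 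Hence the \fsubu decision procedure decides $\Theta \vdash^\top t:T$, as required.

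No new technical obstacle is expected: the statement is essentially an application of the two preceding results, and the only thing to be careful about is spelling out the trivial inclusion $\Theta \vdash^\top t:T \Rightarrow \Theta \vdash t:T$ in \fsubu so that the equivalence between the two typing relations on the $\beta$-normal, \fsubtop-typed fragment is explicit; once that is done, decidability transfers by composition.
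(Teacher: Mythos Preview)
Your proposal is correct and follows essentially the same approach as the paper: the paper's proof of the corollary is a one-line appeal to Proposition \ref{conservative} combined with the immediately preceding decidability proposition, and you spell out exactly that combination, including the (easy, and only implicitly used in the paper) inclusion $\Theta \vdash^\top t:T \Rightarrow \Theta \vdash t:T$ needed for the converse direction.
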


\section{Conclusions and Further Directions}
We have described a semantics with two related interpretations of bounded quantification. Although this was presented via a simple syntactic reduction of bounded to unbounded quantification, its soundness depends fundamentally on a key semantic property,  \emph{dinaturality}, to relate subtype and parametric polymorphism, and arose from a more general investigation into the denotational semantics of bounded quantification.  

These semantic insights were applied to  give a type system which subsumes both Kernel \fsub and System \fsubtop\!. This sheds light on some of the troublesome aspects of the latter, in particular, by supplying  its missing minimal types. The price for this more well-behaved system --- having two forms of bounded quantification --- need not be paid by the programmer: by restricting to programs annotated with \fsubtop-types, we arrive at a system in which typechecking is decidable and the same set of $\beta$-normal forms can be typed as in System \fsubtop itself.

This treatment of bounded quantification is not dependent on the $\lambda$-calculus setting of System \fsubtop\!, and may transfer to related type systems such as the DOT calculus, where similar problems arise. Indeed, strong Kernel \fsub \cite{HuL}, which is similarly a fragment of System \fsub which achieves both decidability and greater expressiveness than Kernel \fsub\!,
 is derived from an analogous fragment of the type system ${\mathsf{D_{\leqt}}}$, which is the part of DOT without self-referencing and intersection types. Strong Kernel \fsub avoids the rebounding problem by deriving subtyping judgments with respect to two contexts, which may have different bounds for the same variable. A semantic account of this calculus (which is part of the broader aim to develop an intensional denotational semantics of object-oriented programming) may shed light on its expressiveness and relation to \fsubu\!.   

Another conclusion could be drawn from our semantic analysis: since meet types and dinaturality may be used to interpret both forms of bounded quantifier in terms of unbounded quantification, why not interpret programs directly in such a system (for which subtyping and typechecking are straightforward)? 

\bibliographystyle{entics}
\bibliography{names2}

\end{document}